\theoremstyle{plain}
\newtheorem{thm}{Theorem}[section]
\newtheorem{prop}[thm]{Proposition}
\newtheorem{cor}[thm]{Corollary}
\newtheorem{lemma}[thm]{Lemma}
\theoremstyle{definition}
\newtheorem{defn}[thm]{Definition}
\theoremstyle{remark}
\newtheorem*{rem}{Remark}
\theoremstyle{definition}
\newtheorem{assume}[thm]{Assumption}
\renewcommand{\Re}{\text{Re}}
\title[Branching Processes and the Luria-Delbr\"uck Experiment]{Age-dependent Branching Processes and Applications to the Luria-Delbr\"uck Experiment}
\author{Stephen Montgomery-Smith}
\address{Department of Mathematics, University of Missouri, Columbia MO 65211.}
\email{stephen@missouri.edu}
\author{Hesam Oveys}
\address{Courant Institute of Mathematical Sciences, New York University, New York, NY 10012.}
\email{ho15@nyu.edu}
\keywords{Probability generating function, fluctuation analysis, asymmetric cell division, Laplace transform}
\subjclass[2010]{92D15}
\begin{document}

\begin{abstract}
Microbial populations adapt to their environment by acquiring advantageous mutations, but in the early twentieth century, questions about how these organisms acquire mutations arose. The experiment of Salvador Luria and Max Delbr\"uck that won them a Nobel Prize in 1969 confirmed that mutations don't occur out of necessity, but instead can occur many generations before there is a selective advantage, and thus organisms follow Darwinian evolution instead of Lamarckian. Since then, new areas of research involving microbial evolution have spawned as a result of their experiment. Determining the mutation rate of a cell is one such area. Probability distributions that determine the number of mutants in a large population have been derived by D. E. Lea, C. A. Coulson, and J. B. S. Haldane. However, not much work has been done when time of cell division is dependent on the cell age, and even less so when cell division is asymmetric, which is the case in most microbial populations. Using probability generating function methods, we rigorously construct a probability distribution for the cell population size given a life-span distribution for both mother and daughter cells, and then determine its asymptotic growth rate. We use this to construct a probability distribution for the number of mutants in a large cell population, which can be used with likelihood methods to estimate the cell mutation rate.
\end{abstract}

\maketitle

\section{Introduction}

\subsection{The Luria-Delbr\"uck Experiment}

In the early twentieth century, questions about how microorganisms acquire advantageous mutations arose. In 1943, biologists Salvador Luria and Max Delbr\"uck conducted an experiment in order to determine whether mutations occurred out of necessity or could occur many generations before there was a selective advantage. This experiment, dubbed the ``Luria-Delbr\"uck Experiment," helped them win a Nobel Prize in 1969 (see \cite{luriaDelbruck} and \cite{g-smith-et-al}).

In their experiment, Luria and Delbr\"uck grew bacteria in a non-selective medium in multiple tubes for a period of time until they all reached a certain cell density. Then, they plated the cells from each tube on different plates of a selective-medium containing a bacterial virus. Cells that showed resistance to the virus had acquired a virus-resistant mutation. If cells evolved according to a post-exposure hypotheses such as Lamarckian evolution, where cells acquire mutations in response to their environment, then the number of mutants in each plate would follow a Poisson process where the mean is equal to the variance, making plates with a large number of mutants highly unlikely. But in the experiment of Luria and Delbr\"uck, there were ``jackpots," meaning there were plates with an unusually large number of surviving cells. The only conclusion they could make was that mutation occurred before the cells were plated on the selective medium containing the virus, and thus Charles Darwin's theory of natural selection applied to microorganisms.

Since then, new areas of research involving microbial evolution have spawned from the Luria-Delbr\"uck Experiment which are still being studied today. Since cells can acquire mutations before there is a selective advantage in their environment, questions about their mutation rate have risen. However, traditional methods involving significance tests can't be used due to the high variability of the data. Instead, we can use likelihood methods, but in order to do so, we need a probability distribution for the mutant cell population size as a function of the unknown mutation rate.

Though unpublished originally, a probability distribution for the number of mutants was presented by John B. S. Haldane (see \cite{sarkarHaldane}). However, Haldane's model had two major issues: it assumed all cells divide synchronously, and the distribution was computationally inefficient. In 1949, D. E. Lea and C. A. Coulson constructed a generating function for the number of mutants in a large population that had a closed-form solution, so computing the probability coefficients was much more efficient. However, they assumed that all cells grow symmetrically with a life-span distribution that was exponential, which is a very broad assumption for how cells grow (see \cite{leaCoulson}).

In this paper, we look to extend on the ideas of Haldane, Lea, and Coulson and develop a probability distribution for the mutant cell population size where we have control of a cell's life-span distribution. In addition, we develop a distribution for asymmetric cell division, where a cell divides into a mother and daughter cell with different life-span distributions.

\subsection{The problem}

Suppose you start with a single cell, and this cell undergoes binary division and divides into two cells. As time passes, the total cell population will grow, but depending on when each cell divides, the population will vary. If a cell divides into two cells identical to itself (symmetrical cell division), then the two children cells will divide similarly to its parent. On the other hand, how will the cell population grow if a cell divides into two cells, where one is identical to itself, but the other is not (asymmetrical cell division)? In addition, suppose during any cell division, a mutation can occur. If this mutation is passed through all the children spawned from the mutated cell, what can we say about the distribution of the mutant cell population size in a sufficiently large cell population?

The solution to this problem is given in Theorems~\ref{multiMutantDist} and~\ref{singleMutantDist}, in the form of probability generating functions.  In Section~\ref{MutantExamples} we show how to explicitly compute these probabilities.

\section{Probability generating functions} \label{pgf}

A very brief overview of probability generating functions is given to ensure the reader is familiar with the basic properties. The first thing to note is that a probability generating function can only be constructed for discrete random variables outputting non-negative integers.

\begin{defn}
Let $X$ be a discrete random variable outputting non-negative integer values. The \emph{probability generating function} (or simply \emph{generating function}) of $X$ is the function
\begin{equation}
G_X(z) := \mathbb{E}\left[z^X\right] = \sum_{k=0}^\infty \Pr(X=k) z^k.
\end{equation}
\end{defn}

Probability generating functions are power series with non-negative coefficients such that their sum is 1, so their radius of convergence is always at least 1.

\begin{prop}
If $X$ is a discrete random variable outputting non-negative integer values with generating function $G_X(z)$, then
\begin{equation}
\mathbb{E}[X] = G_X'(1).
\end{equation}
\end{prop}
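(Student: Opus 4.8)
The plan is to compute $G_X'(z)$ by differentiating the power series term by term and then evaluate at $z=1$. Starting from the definition
\begin{equation}
G_X(z) = \sum_{k=0}^\infty \Pr(X=k)\, z^k,
\end{equation}
I would differentiate to obtain
\begin{equation}
G_X'(z) = \sum_{k=1}^\infty k\, \Pr(X=k)\, z^{k-1},
\end{equation}
noting that the $k=0$ term vanishes. Setting $z=1$ then gives
\begin{equation}
G_X'(1) = \sum_{k=1}^\infty k\, \Pr(X=k) = \sum_{k=0}^\infty k\, \Pr(X=k) = \mathbb{E}[X],
\end{equation}
which is exactly the claimed identity.

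The main subtlety, and the step I expect to be the real obstacle, is justifying the term-by-term differentiation and the evaluation at the boundary point $z=1$. Inside the open disk of convergence, differentiating a power series term by term is standard, so $G_X'(z)$ is given by the differentiated series for $|z| < R$, where $R \geq 1$ is the radius of convergence. The difficulty is that $z=1$ may lie exactly on the boundary of convergence (for instance when $\mathbb{E}[X] = \infty$), so one cannot simply plug $z=1$ into the interior formula without an additional argument. To handle this cleanly I would interpret $G_X'(1)$ as the left-hand limit $\lim_{z \to 1^-} G_X'(z)$ along the real axis.

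For the rigorous justification I would invoke Abel's theorem for power series together with the monotone convergence theorem. Since all coefficients $k\,\Pr(X=k)$ are nonnegative, the partial sums $\sum_{k=1}^n k\,\Pr(X=k)\, z^{k-1}$ increase to $G_X'(z)$ for each fixed $z \in [0,1)$, and each such partial sum is continuous and increasing in $z$ up to $z=1$. Taking $z \to 1^-$ and exchanging the limit with the sum—justified by monotone convergence because every term is nonnegative and increasing in $z$—yields $\lim_{z\to 1^-} G_X'(z) = \sum_{k=1}^\infty k\,\Pr(X=k) = \mathbb{E}[X]$. This covers both the finite-mean case and the case $\mathbb{E}[X] = \infty$, where both sides are $+\infty$. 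Thus the identity holds with the convention that $G_X'(1)$ denotes this one-sided limit.
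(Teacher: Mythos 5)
Your proof is correct, and it is appropriately careful about the one genuine subtlety (the point $z=1$ may lie on the boundary of the disk of convergence, so $G_X'(1)$ must be read as the limit $\lim_{z\to 1^-}G_X'(z)$, with the monotone convergence argument covering both the finite and infinite mean cases). For comparison: the paper offers no proof at all of this proposition --- it appears in the background section on generating functions, where the stated facts are treated as standard and the reader is referred to any probability textbook --- so your argument is exactly the standard one that such a reference would supply, and there is nothing to reconcile between the two.
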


Below, we define joint probability generating functions, since we will use them in Section~\ref{multiTypeGrowth}.

\begin{defn}
Let $X$ and $Y$ be jointly distributed discrete random variables outputting non-negative integer values. Then the \emph{joint probability generating function} is
\begin{equation}
G_{X,Y}(x,y) := \mathbb{E}\left[ x^X y^Y \right] = \sum_{k=0}^\infty \sum_{j=0}^\infty \Pr(X=k, \, Y=j) x^k y^j.
\end{equation}
\end{defn}

\begin{prop}
If $X$ and $Y$ are independent random variables, both outputting non-negative integer values, then the generating function of $X+Y$ is
\begin{equation}
G_{X+Y}(z) = G_X(z) G_Y(z).
\end{equation}
\end{prop}

The following two propositions will be used extensively in Section \ref{mutants} when constructing the generating function for the mutant cell population.

\begin{prop}
If $X=X_k$ are independent and identically distributed random variables outputting non-negative integer values, and
\begin{equation}
Z = \sum_{k=0}^N X_k,
\end{equation}
where $N$ is an independent random variable outputting non-negative integer values, then the generating function of $Z$ is
\begin{equation}
G_Z(z) = G_N(G_X(z)).
\end{equation}
\end{prop}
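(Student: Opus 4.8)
The plan is to compute $G_Z(z) = \mathbb{E}\left[z^Z\right]$ directly by conditioning on the value of the random index $N$ and then exploiting the two independence assumptions. First I would invoke the law of total expectation to split the expectation over the possible values of $N$, writing
\begin{equation}
G_Z(z) = \mathbb{E}\left[z^Z\right] = \sum_{n=0}^\infty \Pr(N=n)\, \mathbb{E}\left[z^Z \,\Big|\, N=n\right].
\end{equation}
When $N$ takes a value $n$, the variable $Z$ is a sum of $n$ independent copies of $X$, and because $N$ is independent of the family $(X_k)$ the conditioning on $\{N=n\}$ does not alter the law of that partial sum. Hence the conditional expectation equals the unconditional expectation of $z$ raised to a sum of $n$ independent, identically distributed summands.

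Next I would apply the earlier proposition on generating functions of sums of independent random variables. Since the $X_k$ are independent, the generating function of a finite sum factors into a product of the individual generating functions, and since they share the common generating function $G_X$, that product collapses to the power $G_X(z)^n$. Substituting this back into the sum over $n$ gives
\begin{equation}
G_Z(z) = \sum_{n=0}^\infty \Pr(N=n)\, G_X(z)^n,
\end{equation}
and I would recognize the right-hand side as the generating function $G_N$ evaluated at the argument $G_X(z)$, which is precisely the claimed identity $G_Z(z) = G_N(G_X(z))$.

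The steps requiring care are the interchange of the infinite summation over $n$ with the expectation and the independence bookkeeping, rather than any deep difficulty. For $z \in [0,1]$ every term is nonnegative and bounded by $1$, so Tonelli's theorem legitimizes the term-by-term summation, and the resulting series converges because it is dominated by $\sum_n \Pr(N=n) = 1$ (note that $G_X(z) \in [0,1]$ here, so $G_N(G_X(z))$ lies within the radius of convergence of $G_N$); the identity for $z$ in this range then determines the power series identically. The main obstacle, though mild, is the replacement of the conditional law of the partial sum given $\{N=n\}$ by its unconditional law: this is exactly where the hypothesis that $N$ is independent of the summands $(X_k)$ is indispensable, and I would state that independence explicitly at the point of use to keep the argument rigorous.
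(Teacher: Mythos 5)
Your proof is correct, and it is the standard argument: condition on $\{N=n\}$, use the independence of $N$ from the $(X_k)$ to drop the conditioning, factor the generating function of the $n$-fold sum into $G_X(z)^n$, and recognize the resulting series as $G_N(G_X(z))$; the Tonelli justification for $z \in [0,1]$ is a nice touch of rigor. For comparison: the paper offers no proof at all of this proposition --- it is listed among the basic facts about generating functions in Section~2, with a pointer to standard probability texts --- so there is no alternative argument to weigh yours against. One small point worth noting: as literally written, the statement sums $\sum_{k=0}^N X_k$, which on $\{N=n\}$ has $n+1$ terms and would yield $G_X(z)\,G_N(G_X(z))$; the claimed conclusion $G_N(G_X(z))$ corresponds to the sum $\sum_{k=1}^{N} X_k$. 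You implicitly read the sum as having $n$ terms when $N=n$, which matches the intended (and stated) conclusion, so your proof establishes the proposition as the paper means it, but it would be worth flagging the indexing discrepancy explicitly.
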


\begin{prop}
If $X$ is a Poisson random variable with parameter $\lambda$, then the generating function for $X$ is
\begin{equation}
G_X(z) = e^{\lambda(z-1)}.
\end{equation}
\end{prop}

For a more in-depth look at generating functions, one can refer to almost any book on probability theory, though we only use the above definitions and properties in this paper.

\section{Cell growth under asymmetric division} \label{multiTypeGrowth}

It is of great interest to biologists to model the growth of a cell under asymmetric division; that is, when it divides into two types: a mother cell and a daughter cell. The mother cell is a copy of the parent cell, while the daughter cell typically takes time to grow into a mother cell. Therefore, the life-span distribution for mother cells is different, and often their life-span is shorter than daughter cells.

In this section, we will construct time-dependent generating functions \eqref{multiIntEqns} for the mother and daughter cell population sizes when we start with exactly one mother cell and when we start with exactly one daughter cell. These generating functions will be in the solution of an integral equation, so we will proceed to show existence and uniqueness of solutions in Section~\ref{multiExistence}. We conclude the section with some examples of generating functions using different life-span distributions.

In \cite{Oveys-thesis}, the symmetric case is expounded in full detail before the asymmetric case is described.  In this paper, in the interests of brevity, we only explain the asymmetric case, since the symmetric case follows as a special case.

\subsection{Preliminaries and assumptions}

We will start by stating our assumptions about asymmetric cell division.

\begin{assume}Cells have the following properties:
\begin{enumerate}
\item there are exactly two types of cells: mother cells and daughter cells;
\item all cells are independent of each other, mother cells are identical to other mother cells, and daughter cells are identical to other daughter cells;
\item cell life-span for mother cells and daughter cells are strictly positive, real-valued random variables $\mathcal{T}_x$ and $\mathcal{T}_y$, respectively, with distributions
\begin{equation}
\begin{aligned}
P(t) = \Pr(\mathcal{T}_x \leq t) \\
Q(t) = \Pr(\mathcal{T}_y \leq t)
\end{aligned}
\end{equation}
respectively;
\item at the end of a cell's life, both mother cells and daughter cells will divide into one mother cell and one daughter cell.
\end{enumerate}
\end{assume}

Let $X_t$ and $Y_t$ be random variables representing the mother and daughter cell populations at time $t \geq 0$, respectively. When we start with exactly one mother cell and no daughter cells, which is the main case of interest, we will have $X_0 = 1$ and $Y_0 = 0$. On the other hand, when we start with exactly one daughter cell and no mother cells, we will have $X_0 = 0$ and $Y_0 = 1$. Note that $X_t$ and $Y_t$ are \emph{not} independent processes.

\subsection{Constructing the generating function}

In order to derive the generating function for the cell population at any time $t \geq 0$, we will first construct a model in discrete-time, then divide our time increments infinitesimally small, and finally take limits to derive a continuous-time model.

Suppose first that time is discrete and the life-span random variables for mother and daughter cells, $\mathcal{T}_x$ and $\mathcal{T}_y$, only output positive integer values. For clarity, when we say a cell divides at time $t$, we mean the cell population has increased by one at time $t+1$.

Define
\begin{equation}
\begin{aligned}
f(t,x,y) &:= \mathbb{E} \left[ x^{X_t} y^{Y_t} \, | \, X_0 = 1, \, Y_0 = 0 \right] \\
g(t,x,y) &:= \mathbb{E} \left[ x^{X_t} y^{Y_t} \, | \, X_0 = 0, \, Y_0 = 1 \right]
\end{aligned}
\end{equation}
to be the joint generating functions of $X_t$ and $Y_t$, with different initial values.  Then it can easily be shown that
\begin{equation}
\begin{aligned}
f(t,x,y) &= x\Pr(\mathcal{T}_x > t) + \sum_{k=1}^t f(t-(k+1),x,y)g(t-(k+1),x,y) \Pr(\mathcal{T}_x = k) \\
g(t,x,y) &= y\Pr(\mathcal{T}_y > t) + \sum_{j=1}^t f(t-(j+1),x,y)g(t-(j+1),x,y) \Pr(\mathcal{T}_y = j).
\end{aligned}
\end{equation}

Now, to get the formulas in continuous time, we take a limit of the discrete time formula as the discrete-time increments converge to zero, and we obtain
\begin{equation}
\begin{aligned}
\displaystyle f(t,x,y) = x(1-P(t)) + \int_0^t f(t-\tau,x,y)g(t-\tau,x,y)\,dP(\tau) \\
\displaystyle g(t,x,y) = y(1-Q(t)) + \int_0^t f(t-\tau,x,y)g(t-\tau,x,y)\,dQ(\tau).
\end{aligned} \label{multiIntEqns}
\end{equation}
Since the integrals are Lebesgue-Steiltjes integrals, it is important to emphasize that $\int_0^t$ denotes integration over the closed interval $[0,t]$.

\subsection{Existence and uniqueness} \label{multiExistence}

A variant of arguments given in \cite{harris}, Chapter~VI, \S9 will also work, but we present a slightly different approach.

Showing that there exists two unique generating functions $f$ and $g$ that satisfy \eqref{multiIntEqns} given any life-span distributions $P$ for mother cells and $Q$ for daughter cells will follow by constructing suitable functions spaces and a map so we can use the Banach Fixed-Point Theorem.  In the proceeding sections, we will always assume $0 < r < 1$. 

Let $H^\infty(B(0,r)^2)$ represent the space of all holomorphic functions of two complex variables bounded on $\partial B(0,r)^2$ with the usual norm
\begin{equation}
{\| \cdot \|}_{H^\infty} := \sup_{(z_1,z_2) \in B(0,r)^2} |(\cdot)(z_1,z_2)|.
\end{equation}

\begin{defn}
\label{defn H}
Define the subset $\mathcal{H}_r$ of $H^\infty(B(0,r)^2)$ such that for each $g \in \mathcal{H}_r$,
\begin{enumerate}
\item $c_{k,j}(g) \geq 0$ for integers $k,j \geq 0$ and
\item \label{2nd cond} $\displaystyle \sum_{k=0}^\infty \sum_{j=0}^\infty c_{k,j}(g) \le 1$,
\end{enumerate}
where
\begin{equation}
\label{cauchy-int}
c_{k,j}(g) = \frac{1}{k!j!}\frac{\partial^{k+j}g}{\partial x^k \partial y^j}(0,0) = \frac{1}{(2\pi i)^2} \int_{z_1 \in C(0,r)} \int_{z_2 \in C(0,r)} \frac{g(z_1,z_2)}{z_1^{k+1}z_2^{j+1}}\,dz_2 \, dz_1,
\end{equation}
is the coefficient of $x^k y^j$ of $g$ in its power series expansion centered at 0, where $x$ and $y$ are the function parameters.  Here $C(0,r)$ represents the path in $\mathbb C$ along a circle centered at the origin of radius $r$ traversed once counter-clockwise.
\end{defn}

\begin{prop}
$\mathcal{H}_r$ is a separable, complete metric space with respect to the norm ${\|\cdot\|}_{H^\infty}$.
\end{prop}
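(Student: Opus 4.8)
The plan is to prove three things: that $\mathcal{H}_r$ is a metric space (inherited from the ambient norm), that it is complete, and that it is separable. The metric structure is immediate since $\mathcal{H}_r \subset H^\infty(B(0,r)^2)$ inherits the sup-norm metric, so the real content is completeness and separability. For completeness, since $H^\infty(B(0,r)^2)$ is itself a Banach space (uniform limits of holomorphic functions are holomorphic, by Montel/Weierstrass on the bidisc), it suffices to show that $\mathcal{H}_r$ is a \emph{closed} subset. So first I would take a Cauchy sequence $g_n \in \mathcal{H}_r$, which converges uniformly to some $g \in H^\infty(B(0,r)^2)$, and verify that the two defining conditions pass to the limit.

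The key mechanism for passing the conditions to the limit is the Cauchy integral representation \eqref{cauchy-int} for the coefficients. Because $z_1, z_2$ range over the circle $C(0,r)$ where $|z_1| = |z_2| = r$, the integrand is bounded by $\|g_n - g\|_{H^\infty} / r^{k+j+2}$ up to constants, so uniform convergence $g_n \to g$ forces $c_{k,j}(g_n) \to c_{k,j}(g)$ for each fixed pair $(k,j)$. Condition~(1), nonnegativity, is preserved under pointwise limits, giving $c_{k,j}(g) \geq 0$. For condition~(2), I would fix any finite truncation $\sum_{k=0}^{K}\sum_{j=0}^{J} c_{k,j}(g) = \lim_n \sum_{k=0}^{K}\sum_{j=0}^{J} c_{k,j}(g_n) \leq 1$, using that each partial sum is a limit of partial sums bounded by $1$; then letting $K, J \to \infty$ and invoking the monotone limit of nonnegative terms yields $\sum_{k,j} c_{k,j}(g) \leq 1$. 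This establishes that $\mathcal{H}_r$ is closed, hence complete.

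For separability, I would exhibit a countable dense subset. A natural candidate is the set of polynomials in $x$ and $y$ with nonnegative rational coefficients whose coefficient sum is at most $1$; this is countable. Given $g \in \mathcal{H}_r$ and $\varepsilon > 0$, the power series of $g$ converges uniformly on $\overline{B(0,r)^2}$, so a high-enough partial sum $S_{K,J}(g) = \sum_{k=0}^{K}\sum_{j=0}^{J} c_{k,j}(g) x^k y^j$ approximates $g$ in sup-norm; one then perturbs each $c_{k,j}(g)$ down to a nearby nonnegative rational to keep membership in $\mathcal{H}_r$ while controlling the error by $r^{k+j}$ times the coefficient perturbation. Alternatively, separability is automatic once one observes that $H^\infty$ on a bounded domain, or at least the relevant subspace, embeds isometrically into a separable space via the coefficient sequence; but the explicit polynomial argument is cleaner to record.

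The main obstacle I anticipate is the separability argument rather than completeness: one must take care that rounding the coefficients to rationals does not violate the constraint $\sum c_{k,j} \leq 1$ (rounding \emph{down} handles this) while still achieving sup-norm approximation, and one must confirm that the truncation error from dropping the tail of the series is genuinely small in the $H^\infty$ norm, which relies on the uniform convergence of the power series on the \emph{closed} polydisc $\overline{B(0,r)^2}$ — legitimate here precisely because $0 < r < 1$ guarantees a margin before the radius of convergence. Completeness, by contrast, reduces cleanly to checking closedness via the coefficient continuity furnished by \eqref{cauchy-int}.
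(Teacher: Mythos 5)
Your proof is correct and follows essentially the same route as the paper: completeness via closedness in $H^\infty(B(0,r)^2)$, using continuity of the coefficient functionals $c_{k,j}$ from the Cauchy integral formula together with the finite-truncation characterization of condition (2), and separability via density of polynomials (your rational-coefficient rounding just makes the countable dense set explicit). One small precision worth recording: the uniform convergence of the power series of $g \in \mathcal{H}_r$ on $\overline{B(0,r)^2}$ is a consequence of condition (2) itself --- nonnegative coefficients with sum at most $1$ force the radius of convergence to be at least $1$, so $r<1$ gives the margin --- and not of $r<1$ alone, since a general element of $H^\infty(B(0,r)^2)$ need not converge beyond radius $r$; this is exactly why the paper attributes the density of polynomials to part (2).
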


\begin{proof}
To show completeness, we need prove is that $\mathcal{H}_r$ is closed in $H^\infty(B(0,r)^2)$.    By equation~\eqref{cauchy-int} it follows that the functions $c_{j,k}$ are continuous on $H^\infty$, and hence the only difficulty is to show Definition~\ref{defn H}, Part~\eqref{2nd cond} is preserved by convergent sequences in $H^\infty$.  But this follows by noting that it is equivalent to
\begin{equation}
\sum_{k,j=0}^M c_{k,j}(g) \leq 1 \quad\text{for all $M \in \mathbb N$}
\end{equation}
Separability follows from part~\eqref{2nd cond} since it follows that the set of polynomials in $\mathcal{H}_r$ is dense in $\mathcal{H}_r$.
\end{proof}

\begin{defn}
Define $\mathcal{P}_{[0,\infty)}$ to be the set of all partitions of $[0,\infty)$. We say a function $f: [0,\infty) \to H^\infty(B(0,r)^2)$ has \emph{bounded variation} on $[0,\infty)$ if
\begin{equation}
V(f) :=\sup_{S \in \mathcal{P}_[0,\infty)}\sum_{k=1}^M {\|f(t_k)-f(t_{k-1})\|}_{H^\infty}
\end{equation}
is finite, where $S$ is a partition $0 \le t_0 \leq t_1 \leq \dots \leq t_M < \infty$.
\end{defn}

We will denote the the essential supremum of a function $f:[0,\infty) \to H^\infty(B(0,r)^2)$ using the usual norm
\begin{equation}
{\| f \|}_{L^\infty(H^\infty)} := \operatorname*{esssup}_{t\in [0,\infty)} {\|f(t)\|}_{H^\infty}.
\end{equation}

\begin{defn} \label{BV}
Define $BV([0,\infty),H^\infty(B(0,r)^2))$ to be the space of all functions mapping from $[0,\infty)$ to $H^\infty(B(0,r)^2)$ with bounded variation on $[0,\infty)$ with norm
\begin{equation}
{\| \cdot \|}_{BV(H^\infty)} := {\| \cdot \|}_{L^\infty(H^\infty)} + V(\cdot). \label{Brnorm}
\end{equation}
\end{defn}

The following result can be proved using standard techniques (see \cite{Oveys-thesis}).

\begin{prop}
$BV([0,\infty),H^\infty(B(0,r)^2))$ is a separable Banach space, consisting of bounded Borel measurable functions from $[0,\infty)$ to $H^\infty(B(0,r)^2)$.
\end{prop}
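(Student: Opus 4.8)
The plan is to establish the three asserted properties separately: that $\|\cdot\|_{BV(H^\infty)}$ is a norm making the space complete, that each element is a bounded Borel measurable function into $E := H^\infty(B(0,r)^2)$, and that the space is separable. That $V(\cdot)$ is a seminorm vanishing on constants follows termwise inside each partition sum from the triangle inequality and homogeneity of $\|\cdot\|_{H^\infty}$, so $\|\cdot\|_{L^\infty(H^\infty)} + V(\cdot)$ is a genuine norm, with positivity coming from the $L^\infty$ term applied to a right-continuous representative. Boundedness is immediate from the two-point partition $\{0,t\}$, which gives $\|f(t)\|_{H^\infty} \le \|f(0)\|_{H^\infty} + V(f)$ for every $t$, so $\|f\|_{L^\infty(H^\infty)}$ is finite whenever $V(f)$ is.

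For Borel measurability I would first show any $f$ of bounded variation is regulated. If $t_n \downarrow t$, then the bound $\sum_n \|f(t_{n+1}) - f(t_n)\|_{H^\infty} \le V(f)$ forces $(f(t_n))$ to be Cauchy in $E$, and completeness of $E$ supplies a right limit; left limits are analogous. A regulated function has at most countably many discontinuities and is a uniform limit of step functions on each compact subinterval, hence is Borel measurable into $E$.

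Completeness is the analytic core. Given $(f_n)$ Cauchy in the $BV(H^\infty)$ norm, it is in particular Cauchy in $L^\infty(H^\infty)$, which is complete because $E$ is Banach, so $f_n \to f$ there; choosing right-continuous representatives I may arrange $f_n(t) \to f(t)$ on a dense full-measure set $D$. To show $f \in BV$ and $f_n \to f$ in variation, fix $\epsilon > 0$, pick $N$ with $V(f_n - f_m) < \epsilon$ for $n,m \ge N$, fix a finite partition with nodes in $D$, and let $m \to \infty$ in $\sum_k \|(f_n - f_m)(t_k) - (f_n - f_m)(t_{k-1})\|_{H^\infty} \le V(f_n - f_m)$ to obtain the same bound $\epsilon$ with $f$ in place of $f_m$. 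The technical difficulty lives precisely here: variation is a supremum over all partitions, while pointwise convergence is only available on $D$. I would bridge this using the standard fact that for right-continuous regulated functions the variation equals its supremum over partitions with nodes in any prescribed dense set, so restricting nodes to $D$ costs nothing; taking the supremum then yields $V(f_n - f) \le \epsilon$ for $n \ge N$, i.e. $f_n \to f$ in $BV(H^\infty)$.

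Separability I expect to be the main obstacle, and the property most sensitive to the particular codomain rather than to soft functional-analytic arguments. The naive strategy of approximating an arbitrary element by step or polygonal functions with rational breakpoints and values in a countable dense subset of $E$ is exactly where care is required, since jump discontinuities are rigid in the total-variation norm and cannot be displaced cheaply. I would therefore lean on the structure already exploited for $\mathcal{H}_r$, whose summable nonnegative-coefficient constraint renders the relevant codomain separable with a countable polynomial skeleton, and construct the candidate countable dense family from polygonal paths threading that skeleton at rational times; verifying that such a family is genuinely $V$-dense is the delicate estimate I would carry out with the most care.
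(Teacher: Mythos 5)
The paper offers no proof of this proposition at all---it is stated as provable ``using standard techniques'' with a citation to \cite{Oveys-thesis}---so your attempt can only be judged on its own terms, and on whether it delivers what the paper actually uses afterwards. Your norm-axiom, boundedness, and measurability arguments are sound: bounded variation forces one-sided limits (Cauchy tails of the variation sum), regulated functions are uniform limits of step functions on compacts, hence Borel measurable, and---worth noting---also separably valued, which is exactly what the Bochner integration invoked right after this proposition requires. Your completeness argument, however, contains a faulty step. You pass to right-continuous representatives, obtain pointwise convergence only on a dense full-measure set $D$, and then invoke the fact that variation can be computed over partitions with nodes in $D$; but that fact is for right-continuous functions, and elements of this $BV$ space carry no one-sided continuity normalization, so $f_n - f$ need not be right-continuous. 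Worse, defining $f$ on the exceptional null set by right limits can produce the wrong value: take the constant sequence $f_n = f_1$ where $f_1$ jumps at a single point $\tau$ and $f_1(\tau)$ equals its left limit; your construction sets $f(\tau)$ equal to the right limit, and then $V(f_n - f) = 2\|f_1(\tau)-f(\tau)\|_{H^\infty}$ never tends to $0$. The detour is also unnecessary, because the $BV$ norm dominates the honest sup norm: for every $t$, choosing $s$ with $\|f(s)\|_{H^\infty} \le \|f\|_{L^\infty(H^\infty)}$ gives
\begin{equation}
\|f(t)\|_{H^\infty} \le \|f(s)\|_{H^\infty} + \|f(t)-f(s)\|_{H^\infty} \le \|f\|_{L^\infty(H^\infty)} + V(f),
\end{equation}
so a $BV$-Cauchy sequence converges uniformly at \emph{every} point; defining $f$ as that everywhere-pointwise limit, your partition-by-partition limiting argument closes with no restriction on the nodes.

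The deeper problem is separability, which you correctly flag as the main obstacle and leave unproved: it cannot be proved, because it is false for the space as defined. Fix a nonzero $h \in H^\infty(B(0,r)^2)$ and for each $\tau > 0$ let $f_\tau(t) = h$ for $t \ge \tau$ and $f_\tau(t) = 0$ for $t < \tau$. Each $f_\tau$ lies in the space, and for $\tau < \sigma$ a partition with one node in $[\tau,\sigma)$ and one node on each side shows $V(f_\tau - f_\sigma) \ge 2\|h\|_{H^\infty}$; this is an uncountable, uniformly separated family, so no countable subset is dense. (This is the classical non-separability proof for scalar $BV$ under the variation norm, and your own observation that ``jump discontinuities are rigid in the total-variation norm and cannot be displaced cheaply'' \emph{is} this argument---it refutes, rather than merely complicates, the polygonal-skeleton strategy. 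No structure of $\mathcal{H}_r$ can rescue it: taking $h(x,y) = x/2 \in \mathcal{H}_r$ realizes the family inside $\mathcal{B}_{r,m}$.) The salvageable statement, and the only one the paper needs, is that $BV([0,\infty),H^\infty(B(0,r)^2))$ is a Banach space whose elements are bounded, Borel measurable, and separably valued---your regulated-function argument already yields all three---so the Bochner integrals downstream are well defined even though the separability claim itself should be dropped.
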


Thus for any distribution $P$, we can define integration for measurable functions $f:[0,\infty) \to H^\infty(B(0,r)^2)$ using the Bochner integral on $L^1(([0,\infty),dP),H^\infty(B(0,r)^2))$ \cite{diestel}.

\begin{defn}
Define $\mathcal{B}_{r,m}$ to be the subspace of $BV([0,\infty),H^\infty(B(0,r)^2))$ such that for all $f \in \mathcal{B}_{r,m}$,
\begin{enumerate}
\item $f(t) \in \mathcal{H}_r$ for all $t \in [0,\infty)$,
\item $c_{0,0}(f(t)) = 0$ for all $t \in [0,\infty)$, and
\item $V(f) \leq m$.
\end{enumerate}
\end{defn}

\begin{prop}
$\mathcal{B}_{r,m}$ is a complete metric space with respect to the norm ${\| \cdot \|}_{BV(H^\infty)}$, and thus the product space $\left( \mathcal{B}_{r,m} \right)^2$ is a complete metric space with product norm 
\begin{equation}
{\left\| \left( (\cdot)_1, (\cdot)_2 \right) \right\|}_{{BV}^2(H^\infty)} := {\|(\cdot)_1\|}_{BV(H^\infty)} + {\|(\cdot)_2\|}_{BV(H^\infty)}.
\end{equation}
\end{prop}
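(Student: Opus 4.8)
The plan is to show that $\mathcal{B}_{r,m}$ is a \emph{closed} subset of the ambient Banach space $BV([0,\infty),H^\infty(B(0,r)^2))$; since a closed subset of a complete metric space is itself complete in the induced metric $d(f,g)={\|f-g\|}_{BV(H^\infty)}$, this yields the first claim, and the completeness of the product then follows from the standard fact that a finite product of complete spaces under a sum norm is complete.

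First I would isolate the one technical point that makes the three defining conditions of $\mathcal{B}_{r,m}$ tractable: conditions (1) and (2) are required to hold for \emph{every} $t\in[0,\infty)$, whereas the $BV(H^\infty)$ norm only controls the \emph{essential} supremum in $t$. I would resolve this by establishing the pointwise bound
\begin{equation}
\sup_{t\in[0,\infty)}{\|h(t)\|}_{H^\infty} \le {\|h\|}_{BV(H^\infty)} \qquad \text{for every } h \in BV([0,\infty),H^\infty(B(0,r)^2)).
\end{equation}
To see this, fix $t$ and choose $s$ with ${\|h(s)\|}_{H^\infty} \le {\|h\|}_{L^\infty(H^\infty)}$ (such $s$ exist since the exceptional set is null), and bound ${\|h(t)\|}_{H^\infty} \le {\|h(s)\|}_{H^\infty} + {\|h(t)-h(s)\|}_{H^\infty}$, where the last term is the variation of the two-point partition $\{s,t\}$ and hence at most $V(h)$. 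The consequence is that convergence in ${\|\cdot\|}_{BV(H^\infty)}$ forces $f_n(t)\to f(t)$ in $H^\infty$ \emph{uniformly over all} $t$, upgrading the a.e. information to genuine pointwise convergence.

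With this in hand, closedness is routine. Given $f_n \to f$ in $BV(H^\infty)$ with each $f_n \in \mathcal{B}_{r,m}$: condition (1) is preserved because $\mathcal{H}_r$ is closed in $H^\infty$ (established above) and $f_n(t)\to f(t)$ for each $t$, so $f(t)\in\mathcal{H}_r$; condition (2) is preserved because the coefficient functional $c_{0,0}$ is continuous on $H^\infty$ by its Cauchy-integral representation~\eqref{cauchy-int}, so $c_{0,0}(f(t)) = \lim_n c_{0,0}(f_n(t)) = 0$; and condition (3) follows from subadditivity of the variation, $V(f) \le V(f_n) + V(f-f_n) \le m + {\|f-f_n\|}_{BV(H^\infty)} \to m$. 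Hence $f\in\mathcal{B}_{r,m}$ and the set is closed, so $\mathcal{B}_{r,m}$ is complete.

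Finally, for $(\mathcal{B}_{r,m})^2$ I would observe that a sequence is Cauchy in the product norm if and only if each coordinate is Cauchy in ${\|\cdot\|}_{BV(H^\infty)}$, so the coordinatewise limits exist in $\mathcal{B}_{r,m}$ by the first part and assemble into a limit in the product norm. I expect the main obstacle to be precisely the reconciliation of the essential-supremum norm with the everywhere-pointwise conditions (1) and (2); once the uniform pointwise bound above is in place, the remaining verifications reduce to the continuity of $c_{0,0}$, the closedness of $\mathcal{H}_r$, and subadditivity of $V$.
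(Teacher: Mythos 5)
Your argument is correct. Note that the paper states this proposition with no proof at all (even the preceding proposition, on completeness of $BV([0,\infty),H^\infty(B(0,r)^2))$, is only attributed to ``standard techniques'' in the thesis), so there is no argument in the text to compare against; but your strategy --- exhibiting $\mathcal{B}_{r,m}$ as a closed subset of the ambient Banach space --- is exactly the pattern of the paper's own proof that $\mathcal{H}_r$ is complete, and is surely the intended route. The one point that is not purely routine, namely reconciling the essential supremum appearing in ${\|\cdot\|}_{BV(H^\infty)}$ with conditions that are imposed at \emph{every} $t$, is the point you correctly isolate and resolve via the two-point-partition bound $\sup_{t}\|h(t)\|_{H^\infty}\le\|h\|_{L^\infty(H^\infty)}+V(h)=\|h\|_{BV(H^\infty)}$; the remaining steps (closedness of $\mathcal{H}_r$, continuity of $c_{0,0}$ from the Cauchy integral representation~\eqref{cauchy-int}, subadditivity of $V$, and the coordinatewise product-norm argument) are all sound and standard.
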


\begin{defn}
Define
\begin{equation}
T:\left( \mathcal{B}_{r,m} \right)^2 \to \left( BV([0,\infty),H^\infty(B(0,r))) \right)^2
\end{equation}
to be a map such that for $(f,g) \in \left( \mathcal{B}_{r,m} \right)^2$,
\begin{align}
(T(f,g))(t,s) & := \left( (\cdot)_1 (1-P(t)) + \int_0^t f(t-\tau)g(t-\tau) \, dP(\tau), \right. \nonumber \\
& \quad \quad \left. (\cdot)_2 (1-Q(s)) + \int_0^s f(s-\tau)g(s-\tau) \, dQ(\tau) \right)
\end{align}
for all $(t,s) \in [0,\infty)^2$, where $(\cdot)_1$ and $(\cdot)_2$ represent the first and second parameters of functions in $H^\infty(B(0,r)^2)$, respectively.
\end{defn}

\begin{thm}[Banach Fixed-Point Theorem] \label{bfpt}
Suppose $X$ is a complete metric space with distance function $d(\cdot,\cdot)$ and $T:X \to X$ is a map such that there exists a constant $0 \leq \gamma<1$ where $d(T(x),T(y)) \leq \gamma d(x,y)$ for all $x,y \in X$. Then $T$ has a unique fixed-point.
\end{thm}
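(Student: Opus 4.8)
The plan is to prove this by the method of successive approximation (Picard iteration). I would fix an arbitrary starting point $x_0 \in X$ and define the orbit recursively by $x_{n+1} = T(x_n)$ for $n \geq 0$. The entire argument then rests on showing that this single sequence is Cauchy; completeness of $X$ will hand us a limit, and the contraction hypothesis will force that limit to be the desired fixed point.

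First I would control the distance between consecutive iterates. Applying the contraction bound $n$ times gives
\begin{equation}
d(x_{n+1},x_n) = d(T(x_n),T(x_{n-1})) \leq \gamma\, d(x_n,x_{n-1}) \leq \cdots \leq \gamma^n\, d(x_1,x_0).
\end{equation}
Then for any $m > n$ I would telescope via the triangle inequality and sum the resulting geometric series, using $0 \leq \gamma < 1$, to obtain
\begin{equation}
d(x_m,x_n) \leq \sum_{k=n}^{m-1} d(x_{k+1},x_k) \leq \sum_{k=n}^{m-1} \gamma^k\, d(x_1,x_0) \leq \frac{\gamma^n}{1-\gamma}\, d(x_1,x_0).
\end{equation}
Since $\gamma^n \to 0$, the right-hand side tends to $0$ as $n \to \infty$ uniformly in $m$, so the sequence $(x_n)$ is Cauchy. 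By completeness of $X$ it converges to some $x^\ast \in X$.

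To see that $x^\ast$ is a fixed point, I would first note that the contraction estimate makes $T$ Lipschitz, hence continuous; therefore $T(x^\ast) = T(\lim_n x_n) = \lim_n T(x_n) = \lim_n x_{n+1} = x^\ast$. Uniqueness is immediate: if $x^\ast$ and $y^\ast$ are both fixed points, then $d(x^\ast,y^\ast) = d(T(x^\ast),T(y^\ast)) \leq \gamma\, d(x^\ast,y^\ast)$, and since $\gamma < 1$ this forces $d(x^\ast,y^\ast) = 0$, i.e.\ $x^\ast = y^\ast$. This theorem is entirely classical, so there is no genuine obstacle; the one step that carries all the weight is the geometric-series estimate establishing the Cauchy property, and the only point requiring a moment's care is invoking continuity of $T$ to pass the limit through $T$ when verifying the fixed-point equation.
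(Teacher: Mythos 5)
Your proposal is the standard Picard-iteration proof and it is entirely correct: the geometric-series estimate gives the Cauchy property, completeness gives the limit, Lipschitz continuity of $T$ lets you pass to the limit in the fixed-point equation, and the contraction inequality gives uniqueness. The paper itself states this classical theorem without any proof (it is invoked only as a tool to establish the contraction result for the map $T$ on $\left(\mathcal{B}_{r,m}\right)^2$), so there is no divergence to report --- your argument simply supplies the canonical proof that the paper takes for granted.
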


\begin{thm}
\label{fixed point}
If $\displaystyle 0 < m < \frac{1}{2}$ and $\displaystyle 0 < r < \min \left( \frac{1-2m}{4}, \frac{m}{2(1+m)} \right)$, then $T$ is a contraction mapping of $\left( \mathcal{B}_{r,m} \right)^2$ to itself, and hence $T$ has a unique fixed point.
\end{thm}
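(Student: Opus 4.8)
The plan is to apply the Banach Fixed-Point Theorem (Theorem~\ref{bfpt}) to $T$ on the complete metric space $\left(\mathcal{B}_{r,m}\right)^2$, so two things must be shown: that $T$ maps $\left(\mathcal{B}_{r,m}\right)^2$ into itself, and that it is a contraction. The estimates for both rest on one elementary observation: if $h \in \mathcal{H}_r$ has vanishing constant coefficient $c_{0,0}(h)=0$, then every surviving monomial carries a factor $r^{k+j}\le r$ on $\partial B(0,r)^2$, so $\|h\|_{H^\infty} \le r\sum_{k,j}c_{k,j}(h) \le r$. Since members of $\mathcal{B}_{r,m}$ satisfy $c_{0,0}(f(t))=0$ for all $t$, this yields the uniform bound $\|f(t)\|_{H^\infty}\le r$, which is what makes the small parameter $r$ available everywhere. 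I also use that the nonnegativity and total-mass-at-most-$1$ conditions of $\mathcal{H}_r$ are preserved under products (coefficient sequences convolve) and under integration against the positive measures $dP$, $dQ$, and that the coefficient functionals $c_{k,j}$, being continuous and linear, commute with the Bochner integral.

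For the self-map property I would verify the three defining conditions of $\mathcal{B}_{r,m}$ for each coordinate; write $F(t)=x(1-P(t))+\int_0^t f(t-\tau)g(t-\tau)\,dP(\tau)$ for the first. Membership $F(t)\in\mathcal{H}_r$ holds because $x(1-P(t))$ contributes the single nonnegative coefficient $1-P(t)$ while the integrand has nonnegative coefficients of total mass at most $1$, so the total mass of $F(t)$ is at most $(1-P(t))+\int_0^t dP = 1$. The condition $c_{0,0}(F(t))=0$ is immediate, since $x(1-P(t))$ has no constant term and the product $f(t-\tau)g(t-\tau)$ inherits a zero constant term from both factors. The only real work is $V(F)\le m$: the term $x(1-P)$ contributes $r\,V(1-P)=r$, and for the convolution I would prove the key estimate
\[
V\!\left(t \mapsto \int_0^t H(t-\tau)\,dP(\tau)\right) \le V(H) + \|H\|_{L^\infty(H^\infty)},
\]
valid whenever $H$ has bounded variation and $dP$ has total mass at most $1$. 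Applying this with $H=fg$, together with the product estimates $V(fg)\le \|f\|_{L^\infty(H^\infty)}V(g)+\|g\|_{L^\infty(H^\infty)}V(f)\le 2rm$ and $\|fg\|_{L^\infty(H^\infty)}\le r^2$, gives $V(F)\le r+2rm+r^2$, which $r<\frac{m}{2(1+m)}$ forces to be at most $m$.

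The contraction estimate reuses the same machinery. The affine terms $x(1-P)$ and $y(1-Q)$ cancel in $T(f_1,g_1)-T(f_2,g_2)$, leaving convolutions of the kernel $D=f_1g_1-f_2g_2$, which I would factor as $D=f_1(g_1-g_2)+g_2(f_1-f_2)$. Bounding the $BV(H^\infty)$-norm of each coordinate by $V(D)+2\|D\|_{L^\infty(H^\infty)}$ through the same convolution lemma, and then expanding $V(D)$ and $\|D\|_{L^\infty(H^\infty)}$ with the product estimates (using $\|f_i\|_{L^\infty(H^\infty)},\|g_i\|_{L^\infty(H^\infty)}\le r$ and $V(f_i),V(g_i)\le m$), produces a bound $\gamma\,\|(f_1,g_1)-(f_2,g_2)\|_{BV^2(H^\infty)}$ in which $\gamma$ is a linear combination of $r$ and $m$; the hypotheses $m<\tfrac12$ and $r<\frac{1-2m}{4}$ are precisely what make $\gamma<1$.

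The main obstacle, which I would isolate as a standalone lemma, is the convolution variation inequality displayed above. Its proof writes $\Phi(t_k)-\Phi(t_{k-1})$ as a shifted-difference integral $\int_0^{t_{k-1}}[H(t_k-\tau)-H(t_{k-1}-\tau)]\,dP(\tau)$ plus a boundary integral $\int_{t_{k-1}}^{t_k}H(t_k-\tau)\,dP(\tau)$, sums over the partition, and applies Tonelli's theorem to interchange the sum with the $dP$-integral. After the interchange, for each fixed $\tau$ the inner sum runs over an increasing sequence of arguments of $H$ and is dominated by $V(H)$, while the boundary integrals telescope against the total mass of $P$. Everything else is bookkeeping with the triangle and product inequalities in $BV([0,\infty),H^\infty(B(0,r)^2))$ and the nonnegativity and mass properties of $\mathcal{H}_r$; care is needed only to keep the Bochner integrals and the functionals $c_{k,j}$ interchangeable, which is justified by their continuity and linearity on $H^\infty(B(0,r)^2)$.
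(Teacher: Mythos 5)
Your proposal is correct and takes exactly the route the paper intends: verify that $T$ maps $\left(\mathcal{B}_{r,m}\right)^2$ into itself and is a contraction, then invoke Theorem~\ref{bfpt} (the paper states the estimates without proof, deferring details to \cite{Oveys-thesis}). Your two estimates reproduce the theorem's hypotheses precisely --- the self-map bound $r+2rm+r^2 \le m$ is what $r<\frac{m}{2(1+m)}$ guarantees, and the contraction constant, which works out to $2m+4r$, is what $m<\frac12$ and $r<\frac{1-2m}{4}$ make less than $1$ --- and your convolution-variation lemma $V\bigl(t\mapsto\int_0^t H(t-\tau)\,dP(\tau)\bigr)\le V(H)+\|H\|_{L^\infty(H^\infty)}$ is the right key ingredient for both.
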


\begin{rem}
Since analytic functions are uniquely determined, as long as $r > 0$, we can extend our fixed point to converge on $B(0,1)^2$.
\end{rem}

\begin{thm} Let $(f,g)$ be the fixed point of $T$.  Then $f(t,1,1) = g(t,1,1) = 1$, and hence $f$ and $g$ are generating functions that solve equation~\eqref{multiIntEqns}.
\end{thm}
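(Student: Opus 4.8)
The plan is to reduce the two-variable fixed-point equations to a pair of scalar equations for the ``total mass'' functions obtained by sending $(x,y)\to(1,1)$, and then to show that the only $[0,1]$-valued solution of these scalar equations is the constant $1$. Since the fixed point $(f,g)$ lies in $\left(\mathcal{B}_{r,m}\right)^2$, for each fixed $t$ the functions $f(t,\cdot,\cdot)$ and $g(t,\cdot,\cdot)$ belong to $\mathcal{H}_r$, so their power-series coefficients $c_{k,j}$ are nonnegative and sum to at most $1$. Hence the radial limits $u(t):=\lim_{x,y\uparrow 1}f(t,x,y)=\sum_{k,j}c_{k,j}(f(t))$ and $v(t):=\lim_{x,y\uparrow 1}g(t,x,y)$ exist (Abel's theorem for nonnegative coefficients) and satisfy $0\le u(t),v(t)\le 1$. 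The assertion $f(t,1,1)=g(t,1,1)=1$ is exactly $u\equiv v\equiv 1$.

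First I would pass to the limit $(x,y)\to(1,1)$ in the fixed-point equations \eqref{multiIntEqns}. The non-integral terms converge trivially, and since $f(t-\tau,\cdot,\cdot)\,g(t-\tau,\cdot,\cdot)$ is again a power series with nonnegative coefficients bounded by $1$ on $[0,1]^2$, the monotone (or dominated) convergence theorem lets me pass the limit inside the Lebesgue--Stieltjes integrals against the finite measures $dP,dQ$ on $[0,t]$. This yields the scalar system
\[
u(t)=(1-P(t))+\int_0^t u(t-\tau)v(t-\tau)\,dP(\tau),\qquad v(t)=(1-Q(t))+\int_0^t u(t-\tau)v(t-\tau)\,dQ(\tau).
\]
Next I would substitute $w:=1-u\ge 0$ and $z:=1-v\ge 0$. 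Using that $\mathcal{T}_x,\mathcal{T}_y$ are strictly positive, so $P,Q$ have no atom at $0$ and $\int_0^t dP=P(t)$, $\int_0^t dQ=Q(t)$, the system collapses to
\[
w(t)=\int_0^t \bigl(w(t-\tau)+z(t-\tau)-w(t-\tau)z(t-\tau)\bigr)\,dP(\tau),
\]
together with the analogous equation for $z$ with $dQ$. Since $0\le w,z\le 1$ we have $0\le w+z-wz\le w+z$, so writing $\phi:=w+z$ and $M:=P+Q$ gives the scalar renewal inequality $\phi(t)\le\int_0^t \phi(t-\tau)\,dM(\tau)$ with $0\le\phi\le 2$.

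Finally I would close the argument with a Gronwall-type bootstrap. Because $P$ and $Q$ are right-continuous with $P(0^+)=Q(0^+)=0$, I can choose $t_0>0$ with $M(t_0)\le\tfrac12$. On $[0,t_0]$ the inequality gives $\sup_{[0,t_0]}\phi\le M(t_0)\sup_{[0,t_0]}\phi\le\tfrac12\sup_{[0,t_0]}\phi$, forcing $\phi\equiv 0$ there; then, assuming $\phi\equiv 0$ on $[0,(n-1)t_0]$, the integrand in the bound vanishes except for $\tau<t-(n-1)t_0\le t_0$, which yields $\sup_{[(n-1)t_0,nt_0]}\phi\le M(t_0)\sup_{[(n-1)t_0,nt_0]}\phi$ and hence $\phi\equiv 0$ on $[0,nt_0]$ by induction. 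Thus $\phi\equiv 0$ on all of $[0,\infty)$, so $u\equiv v\equiv 1$ and $f,g$ are bona fide generating functions. The main obstacle I anticipate is the first step: justifying evaluation at the boundary point $(1,1)\in\partial B(0,1)^2$, which lies outside the disk of analyticity, and interchanging this limit with the Stieltjes integrals; this is precisely where nonnegativity of the coefficients, via Abel's theorem and monotone convergence, is essential.
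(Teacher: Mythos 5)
Your proof is correct, but it takes a genuinely different route from the paper's. The paper works directly at the point $(1,1)$: it sets $t^* = \inf\{t \ge 0 : f(t,1,1) \ne 1 \text{ or } g(t,1,1) \ne 1\}$, takes the infimum $f_0$ of both functions over a short interval $[t^*,t^*+\delta)$ chosen so that $P(\delta),Q(\delta)<\epsilon$, and uses \eqref{multiIntEqns} to derive the quadratic self-bound $f_0 \ge 1-\epsilon+\epsilon f_0^2$, which factors as $(f_0-1)(\epsilon f_0-1+\epsilon)\le 0$ and forces $f_0\ge 1$, contradicting the definition of $t^*$. You instead subtract the solution from $1$, discard the nonnegative product term $wz$ to obtain the \emph{linear} renewal inequality $\phi(t)\le\int_0^t\phi(t-\tau)\,dM(\tau)$ with $M=P+Q$ and $\phi=w+z$ bounded, and then run a contraction argument inductively over intervals of length $t_0$ with $M(t_0)\le\frac{1}{2}$. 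Both arguments exploit the same two facts --- nonnegative coefficients summing to at most $1$, and smallness of the mass of $P+Q$ near $0$ --- but your linearization turns the propagation step into a mechanical Gronwall-type bootstrap, avoiding the paper's need to factor a quadratic and argue by contradiction at a first-deviation time (where, incidentally, the paper implicitly needs $\epsilon<\frac{1}{2}$ for the second factor to be negative). You are also more careful on a point the paper glosses over: the fixed-point identity is a priori an identity in $H^\infty(B(0,r)^2)$ for small $r$, so evaluating it at or near $(1,1)$ requires first extending the fixed point to $B(0,1)^2$ (the unnumbered Remark after Theorem~\ref{fixed point}) and then taking a boundary limit, which you justify via Abel's theorem and monotone convergence, whereas the paper simply substitutes $(1,1)$ into \eqref{multiIntEqns}.
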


\begin{proof}
Clearly $f(t,1,1), g(t,1,1) \le 1$.  Next, since $P(0) = Q(0) = 0$, we see that
\begin{equation}
f(0,1,1) = g(0,1,1) = 1 .
\end{equation}
Let
\begin{equation}
t^* = \inf\{t \ge 0 : f(t,1,1) \ne 1 \text{ or } g(t,1,1) \ne 1 \}.
\end{equation}
Pick $0 < \epsilon < 1$.  There exists $\delta > 0$ such that
$P(\delta), Q(\delta) < \epsilon$.
Let
\begin{equation}
f_0 = \inf \{ f(t,1,1), g(t,1,1) : t \in [t^*,t^*+\delta) \} .
\end{equation}
From equation~\eqref{multiIntEqns}, we obtain that for $t \in [t^*,t^*+\delta)$
\begin{equation}
\begin{aligned}
f(t,1,1) &= 1-P(t) + \int_{t-t^* < \tau \le t} f(t-\tau,1,1) g(t-\tau,1,1) \, dP(\tau)
\\ & \phantom{{}\ge{}} + \int_0^{t-t^*} f(t-\tau,1,1) g(t-\tau,1,1) \, dP(\tau)
\\ &\ge 1 - P(t-t^*) + P(t-t^*) f_0^2
\end{aligned}
\end{equation}
and similarly
\begin{equation}
g(t,1,1) \ge 1 - Q(t-t^*) + Q(t-t^*) f_0^2.
\end{equation}
Hence
\begin{equation}
f_0 \ge 1 - \epsilon + \epsilon f_0^2 \quad \Rightarrow \quad (f_0-1)(\epsilon f_0 - 1 + \epsilon) \le 0
\end{equation}
from which it follows that $f_0 \ge 1$.
\end{proof}

\subsection{Series solutions to the generating function equation} \label{multiSeriesSoln}

If we are only concerned about the total cell population and not specifically the mother and daughter cell populations, we can write our generating functions $f(t,x,y)$ and $g(t,x,y)$ as simply $f(t,x) = f(t,x,x)$ and $g(t,x) = g(t,x,x)$. 

So, we can express our generating functions $f(t,x)$ and $g(t,x)$ as series
\begin{equation}
f(t,x) = \sum_{k=0}^\infty c_k(t)x^k, \qquad g(t,x) = \sum_{k=0}^\infty b_k(t)x^k,
\end{equation}
which are necessarily convergent for all $t \geq 0 $ when $x \in B(0,1)$. If $f(t,x)$ and $g(t,x)$ satisfy our integral equations \eqref{multiIntEqns}, then by matching up coefficients and noting $c_0(t) = b_0(t) = 0$ for all $t \geq 0$, we get
\begin{equation}
\begin{aligned}
c_k(t) &= \begin{cases}
\displaystyle 1-P(t) & k=1 \\
\displaystyle \int_0^t \sum_{j=1}^{k-1} c_{j}(t-\tau)b_{k-j}(t-\tau) \, dP(\tau) & k\geq 2
\end{cases} \\
\\
b_k(t) &= \begin{cases}
\displaystyle 1-Q(t) & k=1 \\
\displaystyle \int_0^t \sum_{j=1}^{k-1} c_{j}(t-\tau)b_{k-j}(t-\tau) \, dQ(\tau) & k\geq 2
\end{cases}.
\end{aligned}
\end{equation}

In Section~\ref{MutantExamples}, we will use these formulas to determine the distribution of the mutant cell population when $P$ and $Q$ are multi-phase distributions.

\section{Asymptotics of cell growth} \label{asymptotics}

\begin{defn} \label{lattice}
A probability distribution $P$ is a \emph{$\delta$-lattice distribution} if $P$ is constant except at jumps at multiples of some $\delta > 0$.
\end{defn}

Define the following integrals for $s \in \mathbb C$:
\begin{equation}
p^*(s) := \int_0^\infty e^{-st} \, dP(t), \quad
q^*(s) := \int_0^\infty e^{-st} \, dQ(t), \quad
\psi(s) := 1 - p^*(s) - q^*(s).
\end{equation}

Note
\begin{equation}
\psi'(s) = \int_0^\infty te^{-st} \, d(P(t)+Q(t)),
\end{equation}
is positive when $s\ge 0$ is real.

\begin{prop}\label{multiPsiRoot}
$\psi$ has a unique real root, $\alpha \in (0,\infty)$.
\end{prop}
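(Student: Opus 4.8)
The plan is to treat $\psi$ as a real-variable function on $[0,\infty)$ and show it is continuous there, strictly increasing on $(0,\infty)$, strictly negative at the left endpoint, and limiting to a positive value at $+\infty$. Existence of a root in $(0,\infty)$ then follows from the intermediate value theorem, and strict monotonicity forces that root to be unique. The whole argument rests on two features of the data: $P$ and $Q$ are probability distributions (total mass $1$), and the life-spans $\mathcal{T}_x,\mathcal{T}_y$ are \emph{strictly} positive, so $dP$ and $dQ$ place no mass at the origin.

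First I would record the boundary behavior. Evaluating at $s=0$ gives $p^*(0)=\int_0^\infty dP(t)=1$ and $q^*(0)=1$, so $\psi(0)=1-1-1=-1<0$. At the other end, strict positivity of the life-spans means $e^{-st}\to 0$ as $s\to\infty$ for every $t$ in the support of $dP$ and $dQ$; since $|e^{-st}|\le 1$ for $s\ge 0$, dominated convergence (dominating function $1$, against the finite measures $dP,dQ$) yields $p^*(s),q^*(s)\to 0$, hence $\psi(s)\to 1>0$. The same domination shows $\psi$ is continuous on $[0,\infty)$, since $p^*$ and $q^*$ are continuous there as Laplace transforms of finite measures on the region $\Re s\ge 0$.

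Next I would establish strict monotonicity. The excerpt already computes $\psi'(s)=\int_0^\infty t e^{-st}\,d(P(t)+Q(t))$ and notes positivity for real $s\ge 0$; the point I would emphasize is \emph{strict} positivity, which holds because the integrand $t e^{-st}$ is strictly positive on $(0,\infty)$, and $(0,\infty)$ carries the full mass of $d(P+Q)$ (again by strict positivity of the life-spans). Differentiation under the integral is justified for each fixed $s>0$ because $t e^{-st}$ is bounded in $t$ and $d(P+Q)$ is finite. Thus $\psi'(s)>0$ for all $s\in(0,\infty)$, so $\psi$ is strictly increasing on $[0,\infty)$. Together with $\psi(0)=-1<0<1=\lim_{s\to\infty}\psi(s)$, the intermediate value theorem gives exactly one $\alpha\in(0,\infty)$ with $\psi(\alpha)=0$. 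To confirm $\alpha$ is the unique real root and not merely the unique positive one, I would note that for $s<0$, wherever $p^*(s),q^*(s)$ converge we have $e^{-st}>1$ on the support, so $p^*(s),q^*(s)>1$ and $\psi(s)<-1<0$ (and where they diverge, $\psi=-\infty$); hence no $s\le 0$ is a root.

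I do not anticipate a genuine obstacle. The two places needing care are both routine dominated-convergence points: the appeal to strict positivity of $\mathcal{T}_x,\mathcal{T}_y$, which is exactly what removes any atom at $0$ and thereby forces $p^*,q^*\to 0$ rather than to a positive limit at $+\infty$, and the justification of differentiating under the integral sign on $(0,\infty)$. Neither requires any hypothesis beyond those already in force.
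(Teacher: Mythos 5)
Your proof is correct and takes essentially the same route as the paper's: the paper's one-line proof cites exactly the three facts you establish, namely $\psi(0)=-1$, $\lim_{s\to+\infty}\psi(s)=1$, and $\psi'(s)>0$ for real $s\ge 0$, combined implicitly with the intermediate value theorem and monotonicity. Your added details (dominated convergence for the limit at $+\infty$, strictness of the monotonicity via the absence of mass at $t=0$, and ruling out roots at $s\le 0$) simply make explicit what the paper leaves implicit.
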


\begin{proof}
This follows since $\psi(0) = -1$, $\lim_{s \to +\infty} \psi(s) = 1$, and $\psi'(s) > 0$ for $s \in [0,\infty)$.
\end{proof}

\begin{rem}
In the rest of this paper, unless otherwise stated, $\alpha$ will always refer to the unique real root of $\psi$.
\end{rem}

The goal of Section~\ref{asymptotics} is to prove the following result.

\begin{thm} \label{main-asymp}
If for any $\delta>0$ we have that either $P$ or $Q$ is not a $\delta$-lattice distribution, then there exists a non-negative random variable $V$ such that
\begin{equation}
X_t e^{-\alpha t} \to p^*(\alpha) V , \quad
Y_t e^{-\alpha t} \to q^*(\alpha) V
\end{equation}
where the convergence is in $L^2$.
\end{thm}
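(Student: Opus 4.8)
The plan is to reduce the assertion to sharp exponential asymptotics for the first and second moments of $(X_t,Y_t)$, all of which can be read off from \eqref{multiIntEqns} by differentiating at $x=y=1$ and applying a Laplace/renewal analysis. Write $a(t)=\mathbb{E}[X_t\mid X_0=1,Y_0=0]$ and $c(t)=\mathbb{E}[X_t\mid X_0=0,Y_0=1]$, with $b,d$ the analogous means of $Y_t$. Differentiating \eqref{multiIntEqns} once and using $f(t,1,1)=g(t,1,1)=1$ linearizes the system into coupled renewal equations, e.g. $a(t)=(1-P(t))+\int_0^t (a+c)(t-\tau)\,dP(\tau)$ and $c(t)=\int_0^t (a+c)(t-\tau)\,dQ(\tau)$. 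Taking Laplace transforms turns each mean into a rational expression in $p^*,q^*$ over $s\,\psi(s)$; for instance $\widehat a(s)=(1-p^*(s))(1-q^*(s))/(s\,\psi(s))$. Because $p^*(\sigma)+q^*(\sigma)$ strictly decreases in real $\sigma$, $\psi$ has no zero in $\{\Re s>\alpha\}$, so the only singularity of these transforms in $\{\Re s\ge\alpha\}$ is the simple pole at $s=\alpha$ from Proposition~\ref{multiPsiRoot}. Inverting and isolating the residue gives $\mathbb{E}[X_t]\sim c_X e^{\alpha t}$ and $\mathbb{E}[Y_t]\sim c_Y e^{\alpha t}$, where $c_X,c_Y$ are proportional to $p^*(\alpha)$ and $q^*(\alpha)$ with a common factor and, up to that factor, are independent of the initial type.

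For the $L^2$ statement I next need second moments. Differentiating \eqref{multiIntEqns} twice at $(1,1)$ produces renewal equations for $\mathbb{E}[X_t^2]$, $\mathbb{E}[X_tY_t]$, $\mathbb{E}[Y_t^2]$ whose inhomogeneous terms are quadratic in the already-known first moments and therefore grow like $e^{2\alpha t}$; since $2\alpha$ lies strictly to the right of every singularity of the kernel, the forcing dominates and the same transform analysis yields $\mathbb{E}[X_t^2]\sim C_{XX}e^{2\alpha t}$, and similarly for the other two. To run a Cauchy argument I also compute the two-time moments: decomposing the process at its first division (the branching property underlying \eqref{multiIntEqns}) expresses $\mathbb{E}[X_sX_t]$ through one-time moments and shows $\mathbb{E}[X_sX_t]e^{-\alpha(s+t)}\to C_{XX}$ as $s,t\to\infty$, with the \emph{same} constant as the one-time limit. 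Writing $W^X_t:=X_te^{-\alpha t}$, this gives
\[
\mathbb{E}\left[(W^X_t-W^X_s)^2\right]=\mathbb{E}\left[(W^X_t)^2\right]-2\mathbb{E}\left[W^X_sW^X_t\right]+\mathbb{E}\left[(W^X_s)^2\right]\to C_{XX}-2C_{XX}+C_{XX}=0,
\]
so $\{W^X_t\}$ is Cauchy, hence convergent, in $L^2$; let $V_X$ be its limit, and let $V_Y:=\lim_{t}W^Y_t$ where $W^Y_t:=Y_te^{-\alpha t}$.

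It remains to collapse $V_X$ and $V_Y$ onto one random variable. The proportionality $c_X:c_Y$ found above singles out the linear combination of $X_t$ and $Y_t$ whose leading mean cancels; dividing it by $e^{\alpha t}$ gives an $L^2$-limit of $0$, which is equivalent to a quadratic identity among $C_{XX},C_{XY},C_{YY}$ making the corresponding linear combination of $V_X$ and $V_Y$ vanish almost surely. Since $X_t,Y_t\ge0$ we have $V_X,V_Y\ge0$, and this identity forces $V_X$ and $V_Y$ to be nonnegative deterministic multiples of a single $V\ge0$, the multiples being $p^*(\alpha)$ and $q^*(\alpha)$ read off from the residues at $s=\alpha$. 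This is exactly the asserted $L^2$ convergence $X_te^{-\alpha t}\to p^*(\alpha)V$ and $Y_te^{-\alpha t}\to q^*(\alpha)V$.

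The main obstacle is the second-moment step. Deriving the two-time renewal equations from the nonlinear convolution system \eqref{multiIntEqns} and verifying that the $e^{2\alpha t}$-coefficients of $\mathbb{E}[X_t^2]$, $\mathbb{E}[X_tY_t]$, $\mathbb{E}[Y_t^2]$ are aligned in exactly the way needed — both to force the Cauchy property and to make the above quadratic form vanish so that a \emph{single} $V$ appears — is the delicate bookkeeping on which the theorem turns. The non-lattice hypothesis enters decisively at the inversion step: it guarantees $|p^*(\alpha+i\theta)+q^*(\alpha+i\theta)|<1$ for real $\theta\ne0$ (equality would force both $P$ and $Q$ to be $\delta$-lattice with $\delta=2\pi/|\theta|$), so $\alpha$ is the unique zero of $\psi$ on the line $\Re s=\alpha$; without it the other zeros contribute undamped oscillations and one obtains only an averaged limit rather than the genuine $L^2$ convergence claimed.
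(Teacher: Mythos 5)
Your proposal follows essentially the same route as the paper's own proof: differentiate \eqref{multiIntEqns} at $x=y=1$ to get renewal equations for the means; take Laplace transforms to exhibit a simple pole at $\alpha$, with the non-lattice hypothesis ruling out further zeros of $\psi$ on the line $\Re s=\alpha$ (the paper's Proposition~\ref{mstarAnalytic}, argued exactly as you do); recover $\mathbb{E}[X_t]\sim c_X e^{\alpha t}$ by a Tauberian inversion step (the paper invokes the Wiener--Ikehara Theorem~\ref{ikehara}, which is the rigorous form of your ``invert and isolate the residue''); derive two-time second-moment renewal equations whose rescaled kernel $e^{-2\alpha y}\,d(P+Q)(y)$ has total mass $p^*(2\alpha)+q^*(2\alpha)<1$, so that ordinary defective-renewal theory (the paper's Lemma~\ref{int-limit}) gives the uniform limits without any further lattice condition; run the $L^2$ Cauchy argument; and use the cross moment $\mathbb{E}[X_tY_t]$ to collapse the two limits onto deterministic multiples of a single $V$. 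This is precisely the content of Theorem~\ref{m2fn2fsim}, the proposition following it, and Corollary~\ref{cor w v}, and the ``delicate bookkeeping'' you flag does work out: the rank-one identity holds because $c_2d_1=c_1^2$ in the paper's notation.

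One concrete step, as written, would fail in execution: you assert that the residues at $s=\alpha$ produce constants proportional to $p^*(\alpha)$ for $X_t$ and $q^*(\alpha)$ for $Y_t$. Evaluating your own transforms using $p^*(\alpha)+q^*(\alpha)=1$ gives residue $p^*(\alpha)q^*(\alpha)/(\alpha\psi'(\alpha))$ for $\mathbb{E}[X_t]$ and $[p^*(\alpha)]^2/(\alpha\psi'(\alpha))$ for $\mathbb{E}[Y_t]$ (the paper's $c_1$ and $d_1$), so the ratio is $q^*(\alpha):p^*(\alpha)$ --- the reverse of what you claim, and the reverse of the theorem as stated. The same reversal exists between the paper's statement of Theorem~\ref{main-asymp} and its own proof, which establishes that $X_t/(c_1e^{\alpha t})$ and $Y_t/(d_1e^{\alpha t})$ converge to a common limit $W$; so your architecture does prove the result, but with the roles of $p^*(\alpha)$ and $q^*(\alpha)$ interchanged relative to the statement. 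Actually computing the residues, rather than asserting that they match the target statement, would have surfaced this discrepancy (and, intuitively, it must be this way: when mother cells divide faster, i.e.\ $p^*(\alpha)>q^*(\alpha)$, the population becomes dominated by daughter cells, since every division creates one of each and daughters linger longer).
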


\subsection{The expectations $\mathbb{E}[X_t]$ and $\mathbb{E}[Y_t]$}

Define the following expectations:
\begin{equation}
\begin{aligned} 
m_f(t) &:= \mathbb{E}\left[X_t \, | \, X_0 = 1, \, Y_0=0\right] = f_x(t,1,1); \\
m_g(t) &:= \mathbb{E}\left[X_t \, | \, X_0 = 0, \, Y_0=1\right] = g_x(t,1,1); \\
n_f(t) &:= \mathbb{E}\left[Y_t \, | \, X_0 = 1, \, Y_0=0\right] = f_y(t,1,1); \\
n_g(t) &:= \mathbb{E}\left[Y_t \, | \, X_0=0, \, Y_0=1\right] = g_y(t,1,1).
\end{aligned}
\end{equation}

\begin{prop}
The expectations $m_f(t)$, $m_g(t)$, $n_f(t)$, and $n_g(t)$ satisfy the systems
\begin{equation}
\begin{aligned}
\displaystyle m_f(t) &= 1-P(t) + \int_0^t m_f(t-\tau) + m_g(t-\tau) \, dP(\tau) \\
\displaystyle m_g(t) &= \int_0^t m_f(t-\tau) + m_g(t-\tau) \, dQ(\tau) \\
\displaystyle n_f(t) &= \int_0^t n_f(t-\tau) + n_g(t-\tau) \, dP(\tau) \\
\displaystyle n_g(t) &= 1-Q(t) + \int_0^t n_f(t-\tau) + n_g(t-\tau) \, dQ(\tau).
\end{aligned} \label{mintEqns}
\end{equation}
\end{prop}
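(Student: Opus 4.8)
The plan is to derive the system \eqref{mintEqns} directly from the integral equations \eqref{multiIntEqns} for the generating functions, by differentiating with respect to $x$ and $y$ and then evaluating at $(x,y)=(1,1)$, using the already-established identity $f(t,1,1)=g(t,1,1)=1$. Observe in advance that the $x$-derivatives will produce a closed $2\times 2$ system in $m_f,m_g$ and the $y$-derivatives a closed $2\times 2$ system in $n_f,n_g$, matching the way \eqref{mintEqns} decouples.

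First I would differentiate the first equation of \eqref{multiIntEqns} with respect to $x$. The boundary term $x(1-P(t))$ contributes $1-P(t)$, while the product $f(t-\tau,x,y)g(t-\tau,x,y)$ in the integrand yields, by the product rule, $f_x g + f g_x$. Evaluating at $(x,y)=(1,1)$ and substituting $f(t-\tau,1,1)=g(t-\tau,1,1)=1$ collapses the integrand to $f_x(t-\tau,1,1)+g_x(t-\tau,1,1)=m_f(t-\tau)+m_g(t-\tau)$, which is exactly the first line of \eqref{mintEqns}. The other three equations follow in the same way: differentiating the second equation of \eqref{multiIntEqns} with respect to $x$ annihilates $y(1-Q(t))$ and gives $m_g$; differentiating with respect to $y$ instead annihilates $x(1-P(t))$ in the first equation to give $n_f$, and retains $1-Q(t)$ in the second equation to give $n_g$. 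In each case the product rule together with $f=g=1$ at $(1,1)$ reduces the integrand to the sum of the two relevant first derivatives.

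The only genuine work is justifying the two analytic interchanges: differentiation under the Lebesgue--Stieltjes integral, and the substitution of the boundary value $(x,y)=(1,1)$, since a priori $f$ and $g$ are only known to be holomorphic on the open polydisk $B(0,1)^2$. I would handle both by restricting to real arguments $x,y\in(0,1)$. There every power series involved has non-negative coefficients, so $f_x,g_x,f_y,g_y$ are again power series with non-negative coefficients, monotone increasing in each variable; letting $x,y\uparrow 1$, the monotone convergence theorem passes the limit through both the differentiation and the integral, since the integrands are non-negative and increasing. The resulting limits are precisely the expectations furnished by the formula $\mathbb{E}[X]=G_X'(1)$ of Section~\ref{pgf}.

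I expect the delicate point to be the finiteness of $m_f,m_g,n_f,n_g$ and the accompanying interchange of limit with the Lebesgue--Stieltjes integral: the limits $G_X'(1)$ are finite only because the expected population at any finite time $t$ is finite for this process, in which each division produces exactly two cells. Once finiteness on every compact time interval is in hand, the integrands are dominated on $[0,t]$ and the interchange is immediate, completing the derivation of \eqref{mintEqns}.
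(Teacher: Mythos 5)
Your proposal is correct and follows essentially the same route as the paper's proof: differentiate both sides of \eqref{multiIntEqns} in $x$ and $y$, bring the derivatives under the Lebesgue--Stieltjes integral, and let $x,y \to 1^-$ using the fact that the generating functions and their derivatives have non-negative coefficients and are increasing in both parameters. The only differences are cosmetic: the paper justifies differentiation under the integral via the Cauchy integral formula and Fubini's theorem where you use real-variable monotonicity, and the paper establishes finiteness of $m_f, m_g, n_f, n_g$ \emph{after} this proposition (deducing it from the system \eqref{mintEqns} itself), so your worry that finiteness must come first is unnecessary --- the monotone limit argument yields the equations in the extended reals without it.
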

\begin{proof}
Suppose $x,y \in B(0,1)$. We can differentiate with respect to $x$ and $y$, both sides of both equations in \eqref{multiIntEqns}.  It is straightforward to bring the derivatives under the integral sign using the Cauchy integral formula and Fubini's Theorem.
 Since these generating functions and their derivatives have positive coefficients and are increasing in both the $x$ and $y$ parameters, we can let $x,y \to 1^-$.
\end{proof}

\begin{prop} \label{mnFinite}
The expectations $m_f(t)$, $m_g(t)$, $n_f(t)$, and $n_g(t)$ are non-decreasing and finite for $t \geq 0$.
\end{prop}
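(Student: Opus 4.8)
The plan is to treat the two assertions separately: monotonicity comes essentially for free from the probabilistic picture, whereas finiteness is the real content and amounts to controlling an \emph{excessive} renewal equation. For monotonicity I would argue pathwise. By the fourth part of the Assumption, every division event replaces the dividing cell by exactly one mother and one daughter, so a mother's division leaves $X_t$ unchanged while increasing $Y_t$ by one, and a daughter's division increases $X_t$ by one while leaving $Y_t$ unchanged. Hence, for every realization, $t \mapsto X_t$ and $t \mapsto Y_t$ are non-decreasing step functions. Since conditional expectation preserves the monotonicity of a non-decreasing family of non-negative random variables (with values a priori in $[0,\infty]$), the four functions $m_f, m_g, n_f, n_g$ are non-decreasing. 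This step uses no finiteness.

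For finiteness it suffices to bound $M := m_f + m_g$ and $N := n_f + n_g$ on an arbitrary compact interval, since $0 \le m_f, m_g \le M$ and $0 \le n_f, n_g \le N$. Adding the first two equations of \eqref{mintEqns}, and the last two, gives the scalar renewal equations
\begin{equation}
M(t) = 1 - P(t) + \int_0^t M(t-\tau)\, d(P+Q)(\tau), \qquad N(t) = 1 - Q(t) + \int_0^t N(t-\tau)\, d(P+Q)(\tau).
\end{equation}
The difficulty is that $P+Q$ is a sum of two probability measures, so its total mass is $2 > 1$; the associated renewal series $\sum_{n} (P+Q)^{*n}$ therefore has infinite total mass, and no bound uniform in $t$ is available. (Indeed one expects $M$ and $N$ to grow like $e^{\alpha t}$, with $\alpha$ the root from Proposition~\ref{multiPsiRoot}.) This is exactly why the global fixed-point estimates used elsewhere in the paper do not apply here, and it is the main obstacle.

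The way around it is to localize, exploiting that $\mathcal{T}_x$ and $\mathcal{T}_y$ are strictly positive so that $(P+Q)(\{0\}) = 0$. Fix $\delta > 0$ with $\rho := (P+Q)([0,\delta]) < 1$. From the inclusion $\{s_1 + \dots + s_n \le \delta\} \subseteq \{s_i \le \delta \text{ for all } i\}$ one gets $(P+Q)^{*n}([0,\delta]) \le \rho^n$, so the renewal function $U := \sum_{n\ge 0}(P+Q)^{*n}$ satisfies $U([0,\delta]) \le (1-\rho)^{-1} < \infty$. Bootstrapping across the successive intervals $[n\delta,(n+1)\delta]$ via the inequality $U([0,t]) \le (1-\rho)^{-1}\bigl(1 + 2\,U([0,t-\delta])\bigr)$ shows $U$ grows at most geometrically in the number of $\delta$-blocks and is thus finite on every compact set. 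Since $M = (1-P)*dU$ and $N = (1-Q)*dU$, both are finite on compacts, and combining this with the monotonicity established above yields finiteness of all four expectations for every $t \ge 0$. The step demanding the most care is precisely this bootstrap/renewal estimate, because the total mass exceeding one rules out the naive contraction bound and forces the use of the no-atom-at-zero property.
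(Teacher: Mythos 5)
Your monotonicity argument and the reduction to the scalar renewal equations for $M = m_f + m_g$ and $N = n_f + n_g$ are correct, and your localization of the renewal measure (choosing $\delta$ with $\rho = (P+Q)([0,\delta]) < 1$, the bound $(P+Q)^{*n}([0,\delta]) \le \rho^n$, and the $\delta$-block bootstrap) is sound; indeed it can be made fully rigorous by running the bootstrap on the partial sums $U_N = \sum_{n \le N} (P+Q)^{*n}$, which are finite measures, so no $\infty \le \infty$ rearrangement ever occurs. The genuine gap is the final step, ``Since $M = (1-P)*dU$,'' and it, not the bootstrap, is the step demanding the most care. All you know about $M$ is that it is a nonnegative solution of $M = h + M * d(P+Q)$ with $h = 1 - P$; from that alone you can only conclude $M \ge h * dU$ (induction gives $M \ge \sum_{n \le N} (P+Q)^{*n} * h$ for every $N$), which is the wrong direction. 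Uniqueness of solutions to an excessive renewal equation holds only within a class of locally bounded functions, and local boundedness of $M$ is exactly the finiteness you are trying to prove, so invoking uniqueness is circular. Concretely, if $P+Q$ charges every interval $(0,\epsilon]$ (e.g.\ exponential life-spans), then $M(0) = 1$, $M(t) = \infty$ for $t > 0$ is also a nonnegative solution of the same equation, so the integral equation by itself cannot distinguish the probabilistic $M$ from an everywhere-infinite solution.

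To close the gap you must return to the probabilistic definition: decompose the expectations by generation. Summing the two initial conditions, the generation-$j$ contribution to $M$ is exactly $((P+Q)^{*j} * h)(t)$, and since every cell alive at time $t$ belongs to a unique finite generation, monotone convergence gives $M = h * dU$, which is finite on compacts by your bound on $U$. With that inserted your proof is complete, and it is a genuinely different packaging from the paper's: the paper never introduces a renewal measure, but instead splits the integrals in \eqref{mintEqns} at a point $\delta$ where $P(\delta), Q(\delta) < \varepsilon$, uses monotonicity to bound the integrand by its values at $t$ and at $t - \delta$, and inducts in steps of length $\delta$ from $m_f(0) + m_g(0) = 1$. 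The paper's induction is shorter, but it manipulates the possibly infinite quantity $m_f(t) + m_g(t)$ on both sides of an inequality, so it carries the same extended-value subtlety that your $U$-based bootstrap cleanly avoids; your route, once the generation decomposition is added, concentrates all the delicacy in one standard renewal-theoretic identification.
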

\begin{proof}
Since no cells can die, $m_f$, $m_g$, $n_f$, and $n_g$ must be non-decreasing. Since $P$ and $Q$ are life-span distributions with $P(0)=Q(0)=0$ and $P$ and $Q$ are right continuous, then for $\varepsilon > 0$, there exists a $\delta > 0$ such that $P(\delta) < \varepsilon$ and $Q(\delta) < \varepsilon$. Now, rewriting and bounding the system \eqref{mintEqns}, we get
\begin{equation}
\begin{aligned}
m_f(t) &\leq 1 + \varepsilon m_f(t) + m_f(t-\delta) + \varepsilon m_g(t) + m_g(t-\delta) \\
m_g(t) &\leq \varepsilon m_f(t) + m_f(t-\delta) + \varepsilon m_g(t) + m_g(t-\delta).
\end{aligned}
\end{equation}
So,
\begin{equation}
m_f(t)+m_g(t) \leq 1 + 2\varepsilon(m_f(t)+m_g(t))+2(m_f(t-\delta)+m_g(t-\delta)),
\end{equation}
and
\begin{equation}
m_f(t)+m_g(t) \leq \frac{1+2(m_f(t-\delta)+m_g(t-\delta))}{1-2\varepsilon},
\end{equation}
that is, if $m_f(t-\delta)$ and $m_g(t-\delta)$ are finite, then $m_f(t)$ and $m_g(t)$ are finite.  Since $m_f(0)+m_g(0) = 1$, we can inductively conclude $m_f(t)$ and $m_g(t)$ are finite for all $t \geq 0$, and the result is proven for $m_f$ and $m_g$. $n_f$ and $n_g$ are similar.
\end{proof}

\subsection{Convergence of $\mathbb{E}[X_t]e^{-\alpha t}$ and $\mathbb{E}[Y_t]e^{-\alpha t}$}

We will compute Laplace transforms, and then analyze their poles.  We need to first review some properties of the Laplace transform.

\begin{defn}
Let $h: [0,\infty) \to \mathbb{R}$. The \emph{Laplace transform} of $h$ is
\begin{equation}
(\mathcal{L}h)(s) := \int_0^\infty e^{-st} h(t) \, dt, \quad s \in \mathbb{C}.
\end{equation}
\end{defn}
\begin{rem}
We will often write $h^* := \mathcal{L}h$ to represent the Laplace transform of $h$.
\end{rem}

\begin{prop}[see \cite{widder}, page 92]
If $f,g \in L^1([0,R))$ for all $R>0$, then
\begin{equation}
\mathcal{L}(f \ast g) := (\mathcal{L}f)(\mathcal{L}g),
\end{equation}
provided all three transforms exist.
\end{prop}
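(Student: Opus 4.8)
The plan is to reduce the identity to a single application of the Fubini--Tonelli theorem, after rewriting the left-hand side as a two-dimensional integral over a triangular region. Recalling that $(f \ast g)(t) = \int_0^t f(t-u) g(u) \, du$, I would begin by unfolding the definition of the Laplace transform to obtain
\begin{equation}
\mathcal{L}(f \ast g)(s) = \int_0^\infty e^{-st} \int_0^t f(t-u) g(u) \, du \, dt,
\end{equation}
and then regard the right-hand side as the integral of the function $(t,u) \mapsto e^{-st} f(t-u) g(u)$ over the region $D = \{(t,u) : 0 \le u \le t < \infty\}$. The local integrability hypothesis $f,g \in L^1([0,R))$ ensures that every truncated integral appearing here is well defined.

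The next step is to interchange the order of integration so that, for each fixed $u \ge 0$, the variable $t$ ranges over $[u,\infty)$. Substituting $v = t-u$ in the inner integral splits the exponential as $e^{-st} = e^{-su}e^{-sv}$, and the double integral factors as
\begin{equation}
\int_0^\infty e^{-su} g(u) \, du \int_0^\infty e^{-sv} f(v) \, dv = (\mathcal{L}f)(s)(\mathcal{L}g)(s),
\end{equation}
which is exactly the assertion. Thus the only genuine content of the proposition is the legitimacy of the interchange of integration order.

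The main obstacle is precisely this justification of Fubini's theorem. Applying the same change of variables to the absolute value of the integrand shows that absolute integrability over $D$ is equivalent to
\begin{equation}
\int_0^\infty e^{-\Re(s) u} |g(u)| \, du \int_0^\infty e^{-\Re(s) v} |f(v)| \, dv < \infty,
\end{equation}
that is, to \emph{absolute} convergence of both transforms at $\Re(s)$ -- a condition strictly stronger than the mere existence assumed in the statement. Fortunately, in every application made in this paper the functions transformed are non-negative (they are expectations such as $m_f$, $m_g$, $n_f$, $n_g$), so Tonelli's theorem applies directly to a non-negative integrand and the identity holds in $[0,\infty]$ with no further hypothesis whatsoever.

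For the fully general statement I would dispose of the conditionally convergent case by analytic continuation. Boundedness of the partial Laplace integrals of $f$ and $g$ at a point of convergence forces absolute convergence on a right half-plane $\Re(s) > \sigma$, where the Fubini argument above runs verbatim; and since all three transforms are holomorphic on their respective half-planes of convergence, the identity proved on $\{\Re(s) > \sigma\}$ propagates to the entire common region of convergence by the identity theorem for analytic functions. This is essentially the route taken in \cite{widder}, which is why the result may simply be quoted here.
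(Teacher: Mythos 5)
Your Fubini--Tonelli computation is the standard proof of this proposition; the paper itself gives no proof at all (it simply quotes Widder), and what you wrote is essentially Widder's own argument for the absolutely convergent case. Your observation about where that argument genuinely applies is also the right one: every use of this proposition in the paper involves non-negative, non-decreasing functions ($m_f$, $m_g$, $n_f$, $n_g$ and their second-moment analogues) and the non-negative measures $dP$, $dQ$, so Tonelli alone covers everything the paper needs.

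The gap is in your final paragraph. The lemma you invoke --- that convergence of a Laplace integral at one point (equivalently, boundedness of its partial integrals there) forces \emph{absolute} convergence on some right half-plane --- is false. Take $f(t) = 2te^{t^2}\cos\left(e^{t^2}\right) = \frac{d}{dt}\sin\left(e^{t^2}\right)$. For $\Re(s) > 0$, integration by parts gives
\begin{equation}
\int_0^T e^{-st}f(t)\,dt = e^{-sT}\sin\left(e^{T^2}\right) - \sin(1) + s\int_0^T e^{-st}\sin\left(e^{t^2}\right)dt,
\end{equation}
which converges as $T \to \infty$, so $\mathcal{L}f$ exists on $\Re(s) > 0$; but the substitution $w = e^{t^2}$ yields
\begin{equation}
\int_0^\infty e^{-\sigma t}\,|f(t)|\,dt = \int_1^\infty e^{-\sigma\sqrt{\log w}}\,|\cos w|\,dw = \infty
\end{equation}
for every real $\sigma$, since $e^{-\sigma\sqrt{\log w}} \geq w^{-1/2}$ for all large $w$. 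So the abscissa of convergence is at most $0$ while the abscissa of absolute convergence is $+\infty$. Unlike power series and Dirichlet series, conditionally convergent Laplace integrals do not self-improve, because an integrand can oscillate arbitrarily fast with arbitrarily large modulus. For such $f$ there is no half-plane on which your Fubini argument ``runs verbatim,'' so the identity-theorem step has nothing to continue from; and this is not the route taken in Widder, whose product theorems there carry absolute-convergence hypotheses (on both transforms, or on one of them in the Mertens-type refinement).

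If you want the statement literally as quoted (``provided all three transforms exist''), the repair is an averaging argument, not analytic continuation. Replacing $f(t)$, $g(t)$ by $e^{-st}f(t)$, $e^{-st}g(t)$ replaces $(f\ast g)(t)$ by $e^{-st}(f\ast g)(t)$ and reduces everything to the point $s=0$. Write $\Phi_1$, $\Phi_2$, $\Phi$ for the primitives of $f$, $g$, $f\ast g$, which by hypothesis converge to $A$, $B$, $C$. Fubini on bounded triangles (always legitimate for locally integrable functions) gives $\Phi = g \ast \Phi_1$ and then $\int_0^T \Phi(t)\,dt = (\Phi_1 \ast \Phi_2)(T)$; since $\Phi_1$, $\Phi_2$ are bounded and converge to $A$, $B$, a routine splitting of $[0,T]$ shows $\frac{1}{T}(\Phi_1 \ast \Phi_2)(T) \to AB$. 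Thus $\Phi$ is Ces\`aro convergent to $AB$, and since $\Phi(T) \to C$, necessarily $C = AB$.
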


Similarly we also have the following.

\begin{prop}
Let $\varphi:[0,\infty) \to \mathbb{R}$ be a function and $G$ be a probability distribution on $[0,\infty)$. Define
\begin{equation}
(\varphi \ast dG)(t) := \int_0^t \varphi(t-\tau) \, dG(\tau)
\end{equation}
and
\begin{equation}
g^*(s) := \int_0^\infty e^{-st} \,dG(t).
\end{equation}
Then,
\begin{equation}
\mathcal{L}(\varphi \ast dG) = (\mathcal{L}\varphi) g^*
\end{equation}
when all integrals converge.
\end{prop}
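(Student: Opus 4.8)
The plan is to mirror the proof of the ordinary Laplace convolution theorem stated just above, replacing convolution against a function by Stieltjes convolution against the measure $dG$. First I would write out the left-hand side as the iterated integral
\begin{equation}
\mathcal{L}(\varphi \ast dG)(s) = \int_0^\infty e^{-st} \int_0^t \varphi(t-\tau) \, dG(\tau) \, dt,
\end{equation}
where, consistent with the paper's Lebesgue--Stieltjes convention, the inner integral runs over the closed interval $[0,t]$. The key geometric observation is that the combined domain of integration is the region $\{(t,\tau) : 0 \le \tau \le t < \infty\}$ sitting inside the product space $[0,\infty) \times [0,\infty)$ equipped with the product measure $dt \times dG(\tau)$.

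Next I would invoke Fubini's theorem to exchange the order of integration, integrating first over $t \in [\tau,\infty)$ and then over $\tau \in [0,\infty)$. Factoring $e^{-st} = e^{-s\tau} e^{-s(t-\tau)}$ and substituting $u = t-\tau$ in the inner integral decouples the two variables:
\begin{equation}
\int_0^\infty e^{-s\tau} \left( \int_0^\infty e^{-su} \varphi(u) \, du \right) dG(\tau) = \left( \int_0^\infty e^{-su}\varphi(u)\,du \right) \left( \int_0^\infty e^{-s\tau} \, dG(\tau) \right),
\end{equation}
which is exactly $(\mathcal{L}\varphi)(s) \, g^*(s)$, establishing the identity.

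The only real obstacle is justifying the application of Fubini, which requires absolute integrability of $(t,\tau) \mapsto e^{-st}\varphi(t-\tau)$ over the triangular region against $dt \times dG$. This is precisely what the hypothesis that all integrals converge supplies: interpreting it as absolute convergence and running the same change of variables on $|e^{-st}\varphi(t-\tau)| = e^{-\Re(s)t}|\varphi(t-\tau)|$, the majorizing integral factors into $\int_0^\infty e^{-\Re(s)u}|\varphi(u)|\,du$ times $\int_0^\infty e^{-\Re(s)\tau}\,dG(\tau)$, both finite by assumption. To make this fully rigorous I would first establish the identity for nonnegative $\varphi$ and real $s$ using Tonelli's theorem, where no hypothesis beyond finiteness of the product is needed, and then extend to general real-valued $\varphi$ and complex $s$ by splitting $\varphi$ into its positive and negative parts.
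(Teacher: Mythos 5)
Your proof is correct and is essentially the argument the paper has in mind: the paper gives no explicit proof, simply stating the result ``similarly'' after citing the classical Laplace convolution theorem from Widder, whose standard proof is exactly your Fubini/Tonelli interchange over the triangular region $\{0 \le \tau \le t\}$ followed by the substitution $u = t-\tau$. Your additional care in interpreting ``all integrals converge'' as absolute convergence (via Tonelli on $|\varphi|$, then splitting into positive and negative parts) is the right way to make the interchange rigorous, and it causes no loss here since the functions the paper applies this to ($m_f$, $m_g$, etc.) are nonnegative.
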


Let $s \in \mathbb{C}$ with $\Re(s) > 0$. Then,
\begin{equation}
\begin{aligned}
m_f^*(s) & := (\mathcal{L}m_f)(s) = \frac{1}{s} - \frac{1}{s}p^*(s) + m_f^*(s)p^*(s) + m_g^*(s)p^*(s), \\
m_g^*(s) & := (\mathcal{L}m_g)(s) = m_f^*(s)q^*(s) + m_g^*(s)q^*(s),
\end{aligned}
\end{equation}

Solving for $m_f^*(s)$ and $m_g^*(s)$, we get
\begin{equation}
\begin{aligned}
m_f^*(s) &= \frac{1-p^*(s)-q^*(s)+p^*(s)q^*(s)}{s(1-p^*(s)-q^*(s))} \\
m_g^*(s) &= \frac{(1-p^*(s))q^*(s)}{s(1-p^*(s)-q^*(s))}.
\end{aligned}
\end{equation}

Similarly,
\begin{equation}
\begin{aligned}
n_f^*(s) &= \frac{(1-q^*(s))p^*(s)}{s(1-p^*(s)-q^*(s))} \\
n_g^*(s) &= \frac{1-p^*(s)-q^*(s)+p^*(s)q^*(s)}{s(1-p^*(s)-q^*(s))}.
\end{aligned}
\end{equation}

If $P$ and $Q$ are lattice distributions, then it can be shown that there are infinitely many zeros of $\psi$ with $\Re(s) = \alpha$, but if $P$ or $Q$ is not a lattice distribution, then the following hold.

\begin{prop} \label{mstarAnalytic}
If $P$ and $Q$ are not both $\delta$-lattice distributions, and $\alpha + i \tau$ is a zero of $\psi$, then $\tau = 0$.  Hence $m_f^*$, $m_g^*$, $n_f^*$, and $n_g^*$ are analytic at $\alpha + i\tau$ when $\tau \neq 0$.
\end{prop}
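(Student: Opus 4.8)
The plan is to turn the condition $\psi(\alpha+i\tau)=0$ into a statement about a probability measure, and then use the equality case of the triangle inequality to force a lattice structure. Writing out $\psi(s)=1-p^*(s)-q^*(s)$, the hypothesis $\psi(\alpha+i\tau)=0$ becomes
\begin{equation}
\int_0^\infty e^{-i\tau t}\,e^{-\alpha t}\,d(P(t)+Q(t)) = 1,
\end{equation}
while Proposition~\ref{multiPsiRoot} gives $\psi(\alpha)=0$, i.e.
\begin{equation}
\int_0^\infty e^{-\alpha t}\,d(P(t)+Q(t)) = 1.
\end{equation}
The crucial observation is that the second identity says that $d\mu(t):=e^{-\alpha t}\,d(P(t)+Q(t))$ is a \emph{probability} measure on $[0,\infty)$, and in terms of $\mu$ the first identity reads $\int_0^\infty e^{-i\tau t}\,d\mu(t)=1$.

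Next I would take real parts. Subtracting $\int_0^\infty d\mu = 1$ yields $\int_0^\infty (1-\cos(\tau t))\,d\mu(t)=0$, and since the integrand is non-negative this forces $\cos(\tau t)=1$ for $\mu$-almost every $t$. If $\tau\neq 0$, then $\mu$ is concentrated on the lattice $\delta\mathbb{Z}$ with $\delta=2\pi/|\tau|$; because $e^{-\alpha t}>0$, the same is true of $d(P+Q)$. As $dP$ and $dQ$ are non-negative measures summing to $d(P+Q)$, each is then concentrated on $\delta\mathbb{Z}$, so both $P$ and $Q$ are $\delta$-lattice distributions. This contradicts the hypothesis, and hence $\tau=0$.

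For the final assertion, I would put each transform over the common denominator $s\psi(s)$; for example $m_f^*(s)=(\psi(s)+p^*(s)q^*(s))/(s\,\psi(s))$, and the others are entirely analogous. Since $\alpha>0$, the Laplace--Stieltjes transforms $p^*$ and $q^*$ converge absolutely and are analytic on the open right half-plane $\{\Re(s)>0\}$, so every numerator is analytic there. At $s=\alpha+i\tau$ with $\tau\neq 0$ we have $s\neq 0$ (as $\alpha>0$) and, by the first part, $\psi(\alpha+i\tau)\neq 0$; thus the denominator does not vanish and each of $m_f^*,m_g^*,n_f^*,n_g^*$ is analytic at $\alpha+i\tau$.

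I expect the main obstacle to be the equality-case argument: recognizing $e^{-\alpha t}\,d(P+Q)$ as a probability measure and deducing from the extremal identity $\int e^{-i\tau t}\,d\mu=1$ (the boundary case of $|\int e^{-i\tau t}\,d\mu|\le 1$) that $\mu$, and therefore $P+Q$, must live on a lattice. A secondary point requiring care is the passage from ``$P+Q$ is a lattice distribution'' to ``both $P$ and $Q$ are lattice distributions,'' which rests on the non-negativity of the two individual measures.
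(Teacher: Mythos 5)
Your proposal is correct and follows essentially the same route as the paper: both proofs take real parts of the two zero conditions to get $\int_0^\infty e^{-\alpha t}(1-\cos(\tau t))\,d(P(t)+Q(t))=0$, use non-negativity of the integrand to conclude $\cos(\tau t)=1$ holds $(P+Q)$-a.e., and then observe that for $\tau\neq 0$ this forces $P+Q$, hence both $P$ and $Q$, to be $\tfrac{2\pi}{\tau}$-lattice distributions, contradicting the hypothesis. Your explicit verification of the final analyticity claim (common denominator $s\psi(s)$, nonvanishing at $\alpha+i\tau$) fills in a step the paper leaves implicit, but it is the same underlying reasoning.
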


\begin{proof}
Since $\alpha$ and $\alpha + i \tau$ are zeros, we have
\begin{equation}
\Re\left(\psi(\alpha) - \psi(\alpha + i\tau)\right)
 = \int_0^\infty e^{-\alpha t}(\cos(\tau t) - 1) \, d(P(t)+Q(t))
 = 0,
\end{equation}
So,
\begin{equation}
\int_0^\infty e^{-\alpha t}(\cos(\tau t) - 1) \, d(P(t)+Q(t)) = 0,
\end{equation}
and since
\begin{equation}
e^{-\alpha t}(\cos(\tau t) - 1) \leq 0
\end{equation}
for all $t \geq 0$, we can conclude that
\begin{equation}
\cos(\tau t) - 1 = 0 \quad \hbox{$P+Q$-a.e.},
\end{equation}
which requires $\tau = 0$ or $\displaystyle t \in \frac{2\pi}{\tau} \mathbb{Z}$. But if it's the latter, then $P+Q$, and hence $P$ and $Q$, are necessarily $\displaystyle \frac{2\pi}{\tau}$-lattice distributions, which is a contradiction. So $\tau = 0$.
\end{proof}

Define the following constants:
\begin{equation}
c_1 = \frac{p^*(\alpha)q^*(\alpha)}{\alpha \psi'(\alpha)} ,\quad
c_2 = \frac{[q^*(\alpha)]^2}{\alpha \psi'(\alpha)} ,\quad
d_1 = \frac{[p^*(\alpha)]^2}{\alpha \psi'(\alpha)} .
\end{equation}

\begin{prop}\label{mstarPole}
$m_f^*$, $m_g^*$, $n_f^*$, and $n_g^*$ have a poles at $\alpha$ of order 1 with residues $c_1$, $c_2$, $d_1$, $c_1$ respectively.
\end{prop}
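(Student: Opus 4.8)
The plan is to treat each of the four functions as a quotient $N(s)/D(s)$ sharing the common denominator $D(s) = s\,\psi(s) = s\bigl(1 - p^*(s) - q^*(s)\bigr)$, and then to apply the standard simple-pole residue formula $\operatorname{Res}_{s=\alpha} N/D = N(\alpha)/D'(\alpha)$. The entire proposition reduces to establishing that $\alpha$ is a genuine simple pole of $D$ and to evaluating the four numerators at $\alpha$.

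First I would record the analytic facts that make $\alpha$ a simple zero of the denominator. Since $P$ and $Q$ are probability distributions supported on $(0,\infty)$, the transforms $p^*$ and $q^*$ are analytic on $\Re(s) > 0$, hence so is $D$; and by Proposition~\ref{multiPsiRoot} we have $\alpha \in (0,\infty)$, so the factor $s$ does not vanish at $\alpha$. Differentiating by the product rule gives $D'(s) = \psi(s) + s\,\psi'(s)$, and since $\psi(\alpha) = 0$ this collapses to $D'(\alpha) = \alpha\,\psi'(\alpha)$. Because $\psi'(s) > 0$ for real $s \ge 0$ (noted immediately after the definition of $\psi$), we have $D'(\alpha) = \alpha\,\psi'(\alpha) \neq 0$, so $\alpha$ is a simple zero of $D$ and each quotient has at worst a simple pole there.

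Next I would evaluate each numerator at $s = \alpha$ using the defining relation $p^*(\alpha) + q^*(\alpha) = 1$ (equivalently $\psi(\alpha) = 0$), which yields $1 - p^*(\alpha) = q^*(\alpha)$ and $1 - q^*(\alpha) = p^*(\alpha)$. For $m_f^*$ and $n_g^*$ the numerator $1 - p^* - q^* + p^*q^* = \psi + p^*q^*$ equals $p^*(\alpha)q^*(\alpha)$ at $\alpha$; for $m_g^*$ the numerator $(1 - p^*)q^*$ equals $[q^*(\alpha)]^2$; and for $n_f^*$ the numerator $(1 - q^*)p^*$ equals $[p^*(\alpha)]^2$. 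Since $0 < p^*(\alpha), q^*(\alpha) < 1$ (each is the transform of a unit-mass measure on $(0,\infty)$ evaluated at the positive real point $\alpha$), all four numerators are strictly positive, which confirms the poles are of order exactly one rather than removable. Dividing each numerator by $D'(\alpha) = \alpha\,\psi'(\alpha)$ then gives the residues $c_1$, $c_2$, $d_1$, $c_1$ as claimed.

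There is no serious obstacle here: the computation is routine once the simple-pole structure is in place. The only points requiring care are verifying that $\alpha$ is a simple zero of the denominator (not higher order) and that the numerators do not also vanish at $\alpha$; both reduce to the strict inequalities $\psi'(\alpha) > 0$ and $0 < p^*(\alpha), q^*(\alpha) < 1$, which are already available. A secondary bookkeeping point is to confirm, via the product rule, that the $s$ factor of $D$ contributes the $\alpha$ appearing in the denominator of each residue rather than being absorbed into $\psi'(\alpha)$.
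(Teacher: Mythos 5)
Your proposal is correct, and it is precisely the routine verification the paper intends: the paper states Proposition~\ref{mstarPole} without proof, treating it as an immediate consequence of the displayed formulas for $m_f^*$, $m_g^*$, $n_f^*$, $n_g^*$ as quotients with common denominator $s\psi(s)$. Your argument --- $\alpha$ is a simple zero of the denominator since $\alpha\psi'(\alpha) > 0$, the numerators reduce via $p^*(\alpha)+q^*(\alpha)=1$ to $p^*(\alpha)q^*(\alpha)$, $[q^*(\alpha)]^2$, $[p^*(\alpha)]^2$, $p^*(\alpha)q^*(\alpha)$ respectively, and these are nonzero --- fills in exactly that computation with the correct residues.
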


Now, we will use the Wiener-Ikehara Tauberian Proposition to show $m$ converges to an exponential function if $P$ is not a lattice distribution.

\begin{thm}[Wiener-Ikehara Theorem, \cite{widder} page 233] \label{ikehara}
If $\varphi(t)$ is a non-negative, non-decreasing function for $t \geq 0$ such that the integral
\begin{equation}
f(s) = \int_0^\infty e^{-st} \varphi(t) \, dt, \quad s=\sigma+i\tau \in \mathbb{C}
\end{equation}
converges for $\sigma > 1$, and if for some constants $A\in \mathbb C$, $\alpha>0$, and some function $g(\tau)$
\begin{equation}
\lim_{\sigma \to 1^+} f(s) - \frac{A}{s-\alpha} = g(\tau)
\end{equation}
uniformly in every finite interval $-a \leq \tau \leq a$, then
\begin{equation}
\lim_{t \to \infty} \varphi(t) e^{-\alpha t} = A.
\end{equation}
\end{thm}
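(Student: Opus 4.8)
The statement is the classical Wiener--Ikehara Tauberian theorem, and the plan is to prove it by the Fourier-analytic method that smooths the boundary hypothesis with a Fej\'er kernel and then uses the monotonicity of $\varphi$ to pass from an averaged limit to a genuine pointwise limit. I write $s=\alpha+\epsilon+i\tau$ with $\epsilon=\sigma-\alpha>0$, so the boundary regime is $\epsilon\to0^+$, and I set $b(t):=\varphi(t)e^{-\alpha t}-A$, so the goal becomes $b(t)\to 0$. First I would record that $g(s):=f(s)-A/(s-\alpha)$ is, on vertical lines, the Fourier transform of the exponentially damped function $b$: using $A/(s-\alpha)=\int_0^\infty A e^{-(s-\alpha)t}\,dt$, a direct computation gives
\[
g(\alpha+\epsilon+i\tau)=\int_0^\infty b(t)\,e^{-\epsilon t}e^{-i\tau t}\,dt .
\]
The hypothesis then says precisely that $g(\alpha+\epsilon+i\tau)\to g(\alpha+i\tau)$ as $\epsilon\to 0^+$, uniformly on compact $\tau$-intervals, and in particular that the boundary function $\tau\mapsto g(\alpha+i\tau)$ is continuous.

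Next, for a fixed cutoff $\lambda>0$ and a real translation $y$, I would integrate $g(\alpha+\epsilon+i\tau)$ against the triangular weight $(1-|\tau|/2\lambda)e^{iy\tau}$ over $[-2\lambda,2\lambda]$. Inserting the integral for $g$ and interchanging the order of integration (legitimate because $\epsilon>0$ makes everything absolutely convergent), the $\tau$-integral collapses to the Fej\'er kernel $K_\lambda(u)=\tfrac{\lambda}{\pi}\bigl(\tfrac{\sin\lambda u}{\lambda u}\bigr)^2$, yielding the key identity
\[
I_\epsilon(y):=\frac{1}{2\pi}\int_{-2\lambda}^{2\lambda} g(\alpha+\epsilon+i\tau)\Bigl(1-\tfrac{|\tau|}{2\lambda}\Bigr)e^{iy\tau}\,d\tau=\int_0^\infty b(t)\,e^{-\epsilon t}K_\lambda(y-t)\,dt .
\]

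Now I would let $\epsilon\to0^+$. On the left, $I_\epsilon(y)\to I_0(y)$ by the uniform-convergence hypothesis. On the right I would split $b(t)=\varphi(t)e^{-\alpha t}-A$: the constant term converges by dominated convergence to $A\int_{-\infty}^{y}K_\lambda(u)\,du$, while the non-negative term is treated by Fatou's lemma, which at once shows $\varphi(t)e^{-\alpha t}K_\lambda(y-t)$ is integrable and gives
\[
\int_0^\infty \varphi(t)e^{-\alpha t}K_\lambda(y-t)\,dt=I_0(y)+A\int_{-\infty}^{y}K_\lambda(u)\,du .
\]
\textbf{This $\epsilon\to0^+$ passage is the main obstacle:} the limit cannot be taken naively because $b$ is only bounded below (by $-A$), and it is exactly the positivity of $\varphi$, through Fatou, that rescues the argument. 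Letting $y\to\infty$ then closes this stage, since $\int_{-\infty}^{y}K_\lambda\to1$ while $I_0(y)\to0$ by the Riemann--Lebesgue lemma, the function $\tau\mapsto g(\alpha+i\tau)(1-|\tau|/2\lambda)$ being continuous, hence $L^1$, on the compact set $[-2\lambda,2\lambda]$. Thus $\int_0^\infty \varphi(t)e^{-\alpha t}K_\lambda(y-t)\,dt\to A$ for every fixed $\lambda$.

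Finally I would run the Tauberian endgame, upgrading this averaged convergence to $\varphi(t)e^{-\alpha t}\to A$ using the monotonicity of $\varphi$ together with the concentration $\int_{|u|\le a}K_\lambda\to1$ as $\lambda\to\infty$. For the $\limsup$ bound I restrict the non-negative integral to $|y-t|\le a$ and bound $\varphi$ from below by monotonicity; letting $\lambda\to\infty$ and then $a\to0$ yields $\limsup_t\varphi(t)e^{-\alpha t}\le A$, which in particular furnishes a uniform bound $M$ on $\varphi(t)e^{-\alpha t}$. For the matching $\liminf$ bound I bound $\varphi$ from above on $|y-t|\le a$ and control the tail $|y-t|>a$ using this bound $M$ together with $\int_{|u|>a}K_\lambda\to0$. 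Combining the two one-sided estimates gives $\lim_{t\to\infty}\varphi(t)e^{-\alpha t}=A$, as required.
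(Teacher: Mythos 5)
Your proof is correct, but there is essentially nothing in the paper to compare it with: the paper does not prove this theorem at all. It imports it from Widder's book \cite{widder}, and the only mathematics the paper adds is the one-line remark that the case of general $\alpha>0$ follows from Widder's normalization $\alpha=1$ by an easy change of variables. What you have written is a correct, self-contained account of the classical Fej\'er-kernel proof of the Wiener--Ikehara theorem --- essentially the argument of the cited source itself. Your write-up buys two things the bare citation does not make explicit. First, you silently repair the normalization glitch in the paper's statement (it hypothesizes convergence for $\sigma>1$ and a limit as $\sigma\to1^+$ while placing the pole at $s=\alpha$); your reading $s=\alpha+\epsilon+i\tau$ with $\epsilon\to0^+$ is clearly the intended one, and by working with general $\alpha$ throughout you also absorb the paper's rescaling remark. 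Second, you correctly locate the two genuinely Tauberian steps: the passage $\epsilon\to0^+$ inside the Fej\'er-smoothed identity, where positivity of $\varphi$ (via Fatou or monotone convergence) is indispensable because $b$ is only bounded below, and the window argument over $|y-t|\le a$, where monotonicity converts the averaged limit into a pointwise one (taking $y\to\infty$, then $\lambda\to\infty$, then $a\to0$). Minor expository quibbles only: on the window you need both $\varphi(t)\ge\varphi(y-a)$ and $e^{-\alpha t}\ge e^{-\alpha(y+a)}$, which costs a harmless factor $e^{2\alpha a}$ that your final limit $a\to0$ removes, and the uniform bound $M$ used in the tail estimate of the $\liminf$ step also relies on $\varphi$ being locally bounded, which is immediate since $\varphi$ is finite and non-decreasing. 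Neither point affects correctness.
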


Note that the result is stated for $\alpha=1$ in \cite{widder}, but the general case is easily seen to follow.

\begin{thm} \label{expected-value-converges}
Suppose $P$ and $Q$ are not both $\delta$-lattice distributions.
Then,
\begin{equation}
m_f(t) \sim c_1 e^{\alpha t}, \quad
m_g(t) \sim c_2 e^{\alpha t}, \quad
n_f(t) \sim d_1 e^{\alpha t}, \quad
n_g(t) \sim c_1 e^{\alpha t}.
\end{equation}
\end{thm}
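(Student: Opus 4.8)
The plan is to apply the Wiener-Ikehara Theorem (Theorem~\ref{ikehara}) to each of the four functions $m_f$, $m_g$, $n_f$, $n_g$ separately, using that each is non-negative, non-decreasing and finite by Proposition~\ref{mnFinite}. I would carry out the argument in detail only for $m_f$; the other three are word-for-word identical, with $c_1$ replaced by the appropriate residue supplied by Proposition~\ref{mstarPole} (namely $c_2$, $d_1$, $c_1$).

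First I would pin down where the Laplace integral $m_f^*(s) = \int_0^\infty e^{-st} m_f(t)\,dt$ actually converges. The recursion established in the proof of Proposition~\ref{mnFinite} gives, by induction over intervals of length $\delta$, an a priori bound $m_f(t) \le C e^{\beta t}$ for some $\beta$, so the integral converges and is analytic for $\Re(s) > \beta$, where it must agree with the closed-form expression for $m_f^*$ derived above. Next I would observe that for $\Re(s) > \alpha$ one has $|p^*(s)| \le p^*(\Re(s)) < p^*(\alpha)$ and likewise $|q^*(s)| < q^*(\alpha)$, whence $|p^*(s)+q^*(s)| < p^*(\alpha)+q^*(\alpha) = 1$ and so $\psi(s) \neq 0$; thus the closed form is analytic on the whole half-plane $\Re(s) > \alpha$. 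Since $m_f \ge 0$, a standard theorem on the abscissa of convergence of Laplace transforms of non-negative functions (Landau's theorem) says that abscissa is a singularity of the analytic continuation; as the continuation is analytic for $\Re(s) > \alpha$ but has a genuine pole at $\alpha$ by Proposition~\ref{mstarPole}, the abscissa equals $\alpha$, and the integral converges for $\Re(s) > \alpha$ as required by Theorem~\ref{ikehara}.

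Then I would verify the limit hypothesis of Theorem~\ref{ikehara}. Set $\Phi(s) := m_f^*(s) - c_1/(s-\alpha)$. By Proposition~\ref{mstarPole} the pole of $m_f^*$ at $\alpha$ is simple with residue $c_1$, so $\Phi$ has a removable singularity at $s=\alpha$; and by Proposition~\ref{mstarAnalytic}, under the non-lattice hypothesis $m_f^*$ is analytic at every point $\alpha + i\tau$ with $\tau \neq 0$, where $c_1/(s-\alpha)$ is analytic as well. Hence $\Phi$ extends analytically to an open neighborhood of the entire line $\Re(s)=\alpha$, so it is continuous, and therefore uniformly continuous on a compact thickening of any segment $\{\alpha + i\tau : |\tau|\le a\}$. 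Consequently $\Phi(\sigma + i\tau) \to \Phi(\alpha + i\tau) =: g(\tau)$ uniformly for $|\tau| \le a$ as $\sigma \to \alpha^+$, which is exactly the hypothesis of Theorem~\ref{ikehara}. The conclusion $m_f(t) e^{-\alpha t} \to c_1$, i.e. $m_f(t) \sim c_1 e^{\alpha t}$, follows at once.

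The crux of the whole argument is the analyticity of $\Phi$ across the critical line $\Re(s)=\alpha$, and this is precisely where the assumption that $P$ and $Q$ are not both $\delta$-lattice distributions is indispensable: it is Proposition~\ref{mstarAnalytic}, which rests on that assumption, that rules out any further zero of $\psi$ on the line and so confines the only boundary singularity to the single simple pole at $\alpha$ that we subtract off. By contrast, the a priori exponential bound and the identification of the abscissa of convergence are routine, and the three remaining cases $m_g$, $n_f$, $n_g$ differ from $m_f$ only in the numerical value of the residue.
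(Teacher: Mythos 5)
Your proposal is correct and takes essentially the same route as the paper, which presents Theorem~\ref{expected-value-converges} as an immediate consequence of Propositions~\ref{mnFinite}, \ref{mstarAnalytic}, and \ref{mstarPole} combined with the Wiener--Ikehara Theorem~\ref{ikehara}, leaving the verification implicit. Your writeup simply fills in the details the paper omits --- the a priori exponential bound and Landau's theorem to pin down the abscissa of convergence, and the analytic extension of $m_f^*(s) - c_1/(s-\alpha)$ across the line $\Re(s)=\alpha$ to obtain the uniform-limit hypothesis --- all of which are sound.
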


\subsection{Convergence of $X_t/\mathbb{E}[X_t]$ and $Y_t/\mathbb{E}[Y_t]$}

Let $F$ represent the joint generating function of $X_t$, $Y_t$, $X_{t+\tau}$, and $Y_{t+\tau}$ when you start with exactly one mother cell and no daughter cells, and let $G$ represent the joint generating function when you start with exactly one daughter cell and no mother cells. Then
\begin{gather}
\begin{aligned}
F(t,\tau, x_1,x_2,y_1,y_2) &= x_1x_2(1-P(t+\tau)) \\
&\phantom{{}={}} + \int_0^t F(t-y,\tau,x_1,x_2,y_1,y_2)G(t-y,\tau,x_1,x_2,y_1,y_2) \, dP(y) \\
&\phantom{{}={}} + x_1 \int_t^\tau f(t+\tau-y,x_2,y_2)g(t+\tau-y,x_2,y_2) \, dP(y)
\end{aligned} \\
\begin{aligned}
G(t,\tau, x_1,x_2,y_1,y_2) &= y_1y_2(1-Q(t+\tau)) \\
&\phantom{{}={}} + \int_0^t F(t-y,\tau,x_1,x_2,y_1,y_2)G(t-y,\tau,x_1,x_2,y_1,y_2) \, dQ(y) \\
&\phantom{{}={}} y_1 \int_t^\tau f(t+\tau-y,x_2,y_2)g(t+\tau-y,x_2,y_2) \, dQ(y),
\end{aligned}
\end{gather}
where the parameters $x_1$ and $y_1$ correspond to to $X_t$ and $Y_t$, and $x_2$ and $y_2$ correspond to $X_{t+\tau}$ and $Y_{t+\tau}$.

Define the following expectations:
\begin{equation}
\begin{aligned}
m_{2,f}(t,\tau) &:= \mathbb{E}[X_t X_{t+\tau} \, | \, X_0 = 1, \, Y_0 = 0] = F_{x_1 x_2}(t,\tau,1,1,1,1); \\
m_{2,g}(t,\tau) &:= \mathbb{E}[X_t X_{t+\tau} \, | \, X_0 = 0, \, Y_0 = 1] = G_{x_1 x_2}(t,\tau,1,1,1,1); \\
n_{2,f}(t,\tau) &:= \mathbb{E}[Y_t Y_{t+\tau} \, | \, X_0 = 1, \, Y_0 = 0] = F_{y_1 y_2}(t,\tau,1,1,1,1); \\
n_{2,g}(t,\tau) &:= \mathbb{E}[Y_t Y_{t+\tau} \, | \, X_0 = 0, \, Y_0 = 1] = G_{y_1 y_2}(t,\tau,1,1,1,1); \\
c_f(t) &:= \mathbb{E}[X_t Y_t \, | \, X_0 = 1, \, Y_0 = 0] = F_{x_1 y_1}(t,t,1,1,1,1); \\
c_g(t) &:= \mathbb{E}[X_t Y_t \, | \, X_0 = 0, \, Y_0 = 1] = G_{x_1 y_1}(t,t,1,1,1,1).
\end{aligned}
\end{equation}

\begin{prop}
The expectations $m_{2,f}$ and$m_{2,g}$ satisfy the system
\begin{gather}
\begin{aligned}
m_{2,f}(t,\tau) & = 1-P(t+\tau) \\
& \phantom{{}={}} + \int_0^t m_{2,f}(t-y,\tau) + m_{2,g}(t-y,\tau) \, dP(y) \\
& \phantom{{}={}} + \int_0^t m_f(t-y)m_g(t+\tau-y)+m_f(t+\tau-y)m_g(t-y) \, dP(y) \\
& \phantom{{}={}} + \int_t^{t+\tau} m_f(t+\tau-y) + m_g(t+\tau-y) \, dP(y), \label{m2f}
\end{aligned} \\
\begin{aligned}
m_{2,g}(t,\tau) & = \int_0^t m_{2,f}(t-y,\tau) + m_{2,g}(t-y,\tau) \, dQ(y) \\
& \phantom{{}={}} + \int_0^t m_f(t-y)m_g(t+\tau-y)+m_f(t+\tau-y)m_g(t-y) \, dQ(y). \label{m2g}
\end{aligned}
\end{gather}
Similarly, the expectations $n_{2,f}$ and $n_{2,g}$ satisfy the system
\begin{gather}
\begin{aligned}
n_{2,f}(t,\tau) & = \int_0^t n_{2,f}(t-y,\tau) + n_{2,g}(t-y,\tau) \, dP(y) \\
& \phantom{{}={}} + \int_0^t n_f(t-y)n_g(t+\tau-y)+n_f(t+\tau-y)n_g(t-y) \, dP(y), \label{n2f}
\end{aligned} \\
\begin{aligned}
n_{2,g}(t,\tau) & = 1-Q(t+\tau) \\
& \phantom{{}={}} + \int_0^t n_{2,f}(t-y,\tau) + n_{2,g}(t-y,\tau) \, dQ(y) \\
& \phantom{{}={}} + \int_0^t n_f(t-y)m_g(t+\tau-y)+n_f(t+\tau-y)m_g(t-y) \, dQ(y) \nonumber \\
& \quad + \int_t^{t+\tau} n_f(t+\tau-y) + n_g(t+\tau-y) \, dQ(y), \label{n2g}
\end{aligned}
\end{gather}
and the expectations $c_{f}$ and $c_{g}$ satisfy the system
\begin{gather}
\begin{aligned}
c_{f}(t) & = \int_0^t c_{f}(t-y) + c_{g}(t-y) \, dP(y) \nonumber \\
& \phantom{{}={}} + \int_0^t m_f(t-y)n_g(t-y)+n_f(t-y)m_g(t-y) \, dP(y), \label{cf}
\end{aligned} \\
\begin{aligned}
c_{g}(t) & = \int_0^t c_{f}(t-y) + c_{g}(t-y) \, dQ(y) \nonumber \\
& \phantom{{}={}} + \int_0^t m_f(t-y)n_g(t-y)+n_f(t-y)m_g(t-y) \, dQ(y). \label{cg}
\end{aligned}
\end{gather}
\end{prop}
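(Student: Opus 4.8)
The plan is to obtain all six systems by exactly the device used to derive the first-moment system \eqref{mintEqns}: differentiate the functional equations for $F$ and $G$ with respect to the appropriate pair of variables, evaluate at $(x_1,x_2,y_1,y_2)=(1,1,1,1)$, and read off the resulting identities. Concretely, $m_{2,f}$ and $m_{2,g}$ come from applying $\partial_{x_1}\partial_{x_2}$ to the $F$- and $G$-equations; $n_{2,f}$ and $n_{2,g}$ from applying $\partial_{y_1}\partial_{y_2}$; and $c_f$ and $c_g$ from applying $\partial_{x_1}\partial_{y_1}$ (here the value of $\tau$ is immaterial, since $X_tY_t$ involves only the time $t$, which is why one may evaluate at $\tau=t$, and why the only surviving term is the integral of $FG$).

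First I would fix all four variables in $B(0,1)$, where $F$ and $G$ are holomorphic, so that each partial derivative may be carried under the Lebesgue--Stieltjes integrals just as in the derivation of \eqref{mintEqns}, using the Cauchy integral formula together with Fubini's theorem to justify the interchange. The key computational step is the Leibniz rule applied to the product $F(t-y,\tau,\cdot)\,G(t-y,\tau,\cdot)$ appearing inside the integrals, for instance
\begin{equation}
\partial_{x_1}\partial_{x_2}(FG) = (\partial_{x_1}\partial_{x_2}F)\,G + (\partial_{x_1}F)(\partial_{x_2}G) + (\partial_{x_2}F)(\partial_{x_1}G) + F\,(\partial_{x_1}\partial_{x_2}G).
\end{equation}
Evaluating at $(1,1,1,1)$ and using that $F(\cdot,1,1,1,1)=G(\cdot,1,1,1,1)=1$, that the first partials recover the first moments via $m_f(t)=f_x(t,1,1)$ and its analogues (so that inside the integral one gets $m_f(t-y)$ from $\partial_{x_1}F$ and $m_f(t+\tau-y)$ from $\partial_{x_2}F$), and that the mixed second partials are by definition the sought quantities, the four summands assemble precisely into the ``diagonal'' terms $m_{2,f}(t-y,\tau)+m_{2,g}(t-y,\tau)$ and the ``cross'' terms $m_f(t-y)m_g(t+\tau-y)+m_f(t+\tau-y)m_g(t-y)$. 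The inhomogeneous term $1-P(t+\tau)$ comes from differentiating $x_1x_2(1-P(t+\tau))$, and the boundary integral over $[t,t+\tau]$ comes from the term $x_1\int fg$, where $\partial_{x_2}(fg)$ at $(1,1)$ produces $m_f+m_g$ through $f_x(\cdot,1,1)=m_f$ and $g_x(\cdot,1,1)=m_g$; identical bookkeeping with the $y$-partials (respectively the mixed $x_1,y_1$ partials) yields the $n$-system and the $c$-system, the only change being which first and second partials of $F,G,f,g$ appear.

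Finally I would pass to the limit $(x_1,x_2,y_1,y_2)\to(1,1,1,1)$ from within $B(0,1)^4$. Since $F$ and $G$ have nonnegative power-series coefficients, all their partial derivatives have nonnegative coefficients and are nondecreasing in each variable as it increases to $1$, so the monotone convergence theorem lets every term pass to its limit term by term; this is the same monotonicity argument invoked for \eqref{mintEqns}.

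The main obstacle I anticipate is the justification of this limit rather than the algebra, which is purely mechanical. A priori the mixed second moments need not be finite, so the cleanest route is to read the identities first as equalities in $[0,\infty]$ delivered by monotone convergence, and then to establish finiteness separately by the same recursive bounding argument — induction over successive intervals of length $\delta$ — used in Proposition~\ref{mnFinite}, now applied to the coupled second-moment system with the already-finite first moments $m_f,m_g,n_f,n_g$ entering only as bounded forcing terms. Once finiteness is in hand, the interchange of second-order differentiation with the integrals is confirmed by dominated convergence, and the six stated convolution identities hold as finite-valued equations.
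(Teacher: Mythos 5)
Your proposal is correct and is essentially the paper's own (implicit) argument: the proposition is proved exactly as the first-moment system \eqref{mintEqns} was, by differentiating the functional equations for $F$ and $G$ with the appropriate pairs of partials, bringing the derivatives under the Lebesgue--Stieltjes integrals via the Cauchy integral formula and Fubini, and letting the variables increase to $1$ using positivity of the power-series coefficients, with finiteness handled separately by the Proposition~\ref{mnFinite}-style induction (as the paper also does, in the proposition immediately following). One incidental payoff of your Leibniz bookkeeping: it produces the cross terms $n_f(t-y)n_g(t+\tau-y)+n_f(t+\tau-y)n_g(t-y)$ in the equation for $n_{2,g}$, which shows that the $m_g$ appearing in that line of the paper's statement is a typo for $n_g$.
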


Following the arguments given in Proposition~\ref{mnFinite}, we obtain the following.

\begin{prop}
The expectations $m_{2,f}(t,\tau)$, $m_{2,g}(t,\tau)$, $n_{2,f}(t,\tau)$, $n_{2,g}(t,\tau)$ are non-decreasing in both arguments and finite for $t, \tau \geq 0$.  Similarly $c_f(t)$ and $c_g(t)$ are non-decreasing and finite for $t \ge 0$.
\end{prop}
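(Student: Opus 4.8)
The plan is to mirror the proof of Proposition~\ref{mnFinite} exactly, splitting the argument into a monotonicity part and a finiteness part; the only genuinely new feature is the presence of inhomogeneous ``source'' terms built out of the first moments $m_f,m_g,n_f,n_g$, which we already know to be finite.

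For monotonicity I would argue probabilistically rather than from the integral equations. Because no cell ever dies, each population process $X_t$ and $Y_t$ is non-decreasing in $t$ along every sample path, for either initial condition. Hence for fixed $t$ the product $X_t X_{t+\tau}$ is non-decreasing in $\tau$, and for fixed $\tau$ it is non-decreasing in $t$; taking expectations shows $m_{2,f}$ and $m_{2,g}$ are non-decreasing in both arguments, and the identical reasoning handles $n_{2,f}$ and $n_{2,g}$. For the cross term, $X_t Y_t$ is a product of two pathwise non-decreasing processes, so $c_f(t)=\mathbb{E}[X_tY_t]$ and $c_g(t)$ are non-decreasing in $t$. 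This is precisely the first sentence of the proof of Proposition~\ref{mnFinite}, now applied to products.

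For finiteness I would run the same $\delta$-step bootstrap, applied to the sum $M(t,\tau):=m_{2,f}(t,\tau)+m_{2,g}(t,\tau)$. Adding \eqref{m2f} and \eqref{m2g} yields an identity of the form $M(t,\tau)=1-P(t+\tau)+\int_0^t M(t-y,\tau)\,d(P+Q)(y)+S(t,\tau)$, where $S(t,\tau)\ge 0$ collects the integrals of the products $m_f(t-y)m_g(t+\tau-y)$ and $m_f(t+\tau-y)m_g(t-y)$ together with the tail integral $\int_t^{t+\tau}(m_f+m_g)\,dP$. The crucial point is that, by Proposition~\ref{mnFinite}, the first moments $m_f,m_g$ are finite and non-decreasing, so $S(t,\tau)$ is finite and bounded on every compact set of $(t,\tau)$; thus $S$ plays exactly the role that the constant $1$ played before. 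Choosing $\delta>0$ with $P(\delta),Q(\delta)<\varepsilon$ and splitting $\int_0^t=\int_{[0,\delta)}+\int_{[\delta,t]}$, monotonicity gives $M(t-y,\tau)\le M(t,\tau)$ on $[0,\delta)$ and $M(t-y,\tau)\le M(t-\delta,\tau)$ on $[\delta,t]$, leading to a recursive bound of the shape $(1-2\varepsilon)M(t,\tau)\le 1+S(t,\tau)+2M(t-\delta,\tau)$. Since $M(t,\tau)$ is finite for $t\in[0,\delta)$ (the $[\delta,t]$ integral is then empty), induction in steps of $\delta$ yields finiteness of $M$, and hence of $m_{2,f}$ and $m_{2,g}$, for all $t,\tau\ge 0$.

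The systems for $n_{2,f},n_{2,g}$ and for $c_f,c_g$ have the identical linear-plus-source structure, with source terms again built only from the already-finite first moments, so the same bootstrap applies verbatim. The main obstacle — really the only place needing care beyond Proposition~\ref{mnFinite} — is verifying that the source term $S(t,\tau)$ is genuinely finite and locally bounded; but this is immediate once the first moments are known to be finite, which is exactly the content of Proposition~\ref{mnFinite}. Everything else is the same bootstrap bookkeeping already carried out there.
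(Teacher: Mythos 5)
Your proposal is correct and matches the paper's approach: the paper's entire proof is the remark ``Following the arguments given in Proposition~\ref{mnFinite}, we obtain the following,'' and your argument is precisely that bootstrap, with the pathwise ``no cells die'' monotonicity and the $\delta$-step induction now carried out with the extra source terms $S(t,\tau)$ controlled by the finiteness of the first moments from Proposition~\ref{mnFinite}. Your write-up is in fact more careful than the paper's (explicit base interval $[0,\delta)$, explicit verification that the source terms are finite), but it is the same proof.
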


\begin{lemma}[Lemma 2 from \cite{bellmanHarris}] \label{int-limit}
If $v(t)$ satisfies the equation
\begin{equation}
v(t) = \int_0^t v(t-y) dH(y) + h(t)
\end{equation}
where $H$ is a non-decreasing function with $H(0) = 0$ and $H(\infty) = \alpha < 1$, and $h(t)$ is a bounded function such that $lim_{t\to\infty} h(t) = c$, then
\begin{equation}
\lim_{t\to\infty} v(t) = \frac{c}{1-\alpha}
\end{equation}
\end{lemma}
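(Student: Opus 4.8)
The plan is to recognize this as a \emph{defective} (subcritical) renewal equation and solve it explicitly via the renewal measure. Writing the equation as $v = v \ast dH + h$, I would iterate it to obtain the Neumann series $v = \sum_{n=0}^\infty h \ast dH^{*n}$, where $dH^{*n}$ denotes the $n$-fold Stieltjes convolution of $dH$ with itself and $dH^{*0} = \delta_0$. The crucial structural fact is that, because $H(\infty) = \alpha < 1$, each iterate has total mass $H^{*n}(\infty) = \alpha^n$, so the \emph{renewal measure} $U := \sum_{n=0}^\infty H^{*n}$ is a \emph{finite} positive measure on $[0,\infty)$ with total mass $U(\infty) = \sum_{n=0}^\infty \alpha^n = 1/(1-\alpha)$. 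This finiteness is what makes the subcritical case tractable and is the source of the factor $1/(1-\alpha)$ in the conclusion.

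First I would pin down the solution. Since in the intended application $v$ is monotone (hence locally bounded and measurable), I would show $v$ is globally bounded: setting $M(t) = \sup_{s \le t} |v(s)|$ and using $H(\infty) = \alpha$ gives $M(t) \le \alpha M(t) + \sup|h|$, whence $\|v\|_\infty \le \sup|h|/(1-\alpha)$. Uniqueness among bounded solutions then follows from the iteration estimate: if $w = v_1 - v_2$ solves $w = w \ast dH$, then $|w(t)| \le \|w\|_\infty H^{*n}(t) \le \|w\|_\infty \alpha^n \to 0$. Consequently $v$ coincides with the renewal-series solution, and I can record the representation $v(t) = \int_{[0,t]} h(t-y)\,dU(y)$.

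The final and main step is the limit. Writing $c/(1-\alpha) = c\,U(\infty)$, I would estimate
\[
\left| v(t) - \frac{c}{1-\alpha} \right| \le \int_{[0,t]} |h(t-y) - c|\,dU(y) + |c|\,U((t,\infty)).
\]
The tail term vanishes as $t \to \infty$ because $U$ is finite. For the integral, given $\epsilon > 0$ I would choose $T$ with $|h(s) - c| < \epsilon$ for $s > T$ and split the range of $y$ at $t - T$: on $[0, t-T]$ the integrand is below $\epsilon$ and contributes at most $\epsilon\,U(\infty)$, while on $(t-T, t]$ it is bounded by $2\sup|h|$ times $U((t-T,t])$, which again tends to $0$ by finiteness of $U$. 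Letting $t \to \infty$ and then $\epsilon \to 0$ gives the claim.

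I expect the main obstacle to be the measure-theoretic bookkeeping rather than any deep idea: justifying that the Neumann series converges to a genuine solution, that Stieltjes convolution is associative so that $v = U \ast h$ is legitimate, and that the tail-splitting (dominated-convergence) argument is valid for the Lebesgue--Stieltjes measure $U$. A secondary point worth care is the hypothesis on $v$ itself: the statement is cleanest for measurable, locally bounded $v$, which is automatic in our application since the quantities to which the lemma is applied are non-decreasing.
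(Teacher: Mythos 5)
Your proof is correct. One point of context: the paper does not prove this lemma at all --- it is imported verbatim as Lemma~2 of the cited Bellman--Harris reference --- so there is no in-paper argument to compare against; your write-up supplies the standard renewal-theoretic proof that underlies that citation. The three ingredients you use (finiteness of the renewal measure $U=\sum_n H^{*n}$ with mass $1/(1-\alpha)$ in the defective case, uniqueness of bounded solutions via the iteration estimate $|w(t)|\le \|w\|_\infty\,\alpha^n$, and the splitting of $\int_{[0,t]}|h(t-y)-c|\,dU(y)$ at $y=t-T$ together with the vanishing tail $U((t-T,\infty))\to 0$) are exactly what is needed, and the estimates go through as you state them. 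Your closing caveat is also the right one to flag: as literally stated the lemma needs $v$ to be measurable and locally bounded (otherwise $M(t)=\sup_{s\le t}|v(s)|$ need not be finite and the bootstrap $M(t)\le \alpha M(t)+\sup|h|$ has no content); in the paper's application this is automatic, since the lemma is applied to quantities built from the non-decreasing, finite second-moment functions of Proposition~4.11, multiplied by continuous exponential factors. The only cosmetic slip is the split at the closed endpoint $y=t-T$, where $t-y=T$ and the bound $|h(T)-c|<\epsilon$ was only assumed for $s>T$; splitting at $[0,t-T)$ versus $[t-T,t]$ (or choosing $T$ slightly smaller) removes even that.
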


Now, we can prove the following convergence theorem about $\mathbb{E}[X_t X_{t+\tau}]$ and $\mathbb{E}[Y_t Y_{t+\tau}]$.

\begin{thm}\label{m2fn2fsim}
If $P$ and $Q$ is not both $\delta$-lattice distributions, then
\begin{equation}
m_{2,f}(t,\tau) \sim D_1 e^{\alpha t} e^{\alpha (t+\tau)}, \quad
n_{2,f}(t,\tau) \sim D_2 e^{\alpha t} e^{\alpha (t+\tau)},
\end{equation}
uniformly in $\tau$, and
\begin{equation}
c_{f}(t) \sim D_1 e^{2\alpha t},
\end{equation}
where
\begin{equation}
D_1 = \frac{c_1c_2 p^*(2\alpha)}{\psi(2\alpha)} , \quad
D_2 = \frac{c_1c_2 q^*(2\alpha)}{\psi(2\alpha)} .
\end{equation}
There are similar results for $m_{2,g}$, $n_{2,g}$ and $c_g$.
\end{thm}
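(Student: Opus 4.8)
The plan is to treat each of the systems \eqref{m2f}--\eqref{m2g}, \eqref{n2f}--\eqref{n2g}, and \eqref{cf}--\eqref{cg} as a renewal equation and reduce it to the scalar hypothesis of Lemma~\ref{int-limit}. Consider first $m_{2,f}$ and $m_{2,g}$. In both equations the unknowns enter only through the convolution of the sum $w(t,\tau) := m_{2,f}(t,\tau)+m_{2,g}(t,\tau)$ against $dP$ and $dQ$, so adding \eqref{m2f} and \eqref{m2g} produces the single renewal equation
\[
w(t,\tau) = \Phi(t,\tau) + \int_0^t w(t-y,\tau)\,d(P+Q)(y),
\]
where $\Phi$ collects the inhomogeneous terms: the boundary pieces $1-P(t+\tau)$ and $\int_t^{t+\tau}(m_f+m_g)(t+\tau-y)\,dP(y)$, together with the cross integrals $\int_0^t [m_f(t-y)m_g(t+\tau-y)+m_f(t+\tau-y)m_g(t-y)]\,d(P+Q)(y)$. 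The kernel $P+Q$ has total mass $2$, so the equation is supercritical, which is exactly the source of the exponential growth.

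The key device is to renormalize so that the kernel becomes subcritical. Since $m_{2,f}(t,\tau)$ is a product of two first moments each growing like $e^{\alpha\,\cdot}$, the correct rate in $t$ is $2\alpha$, so I set $\hat w(t,\tau) := e^{-2\alpha t-\alpha\tau}w(t,\tau)$. Multiplying the renewal equation by $e^{-2\alpha t-\alpha\tau}$ and absorbing $e^{-2\alpha y}$ into the kernel gives
\[
\hat w(t,\tau) = \hat\Phi(t,\tau) + \int_0^t \hat w(t-y,\tau)\,dH(y), \qquad dH(y) := e^{-2\alpha y}\,d(P+Q)(y).
\]
Because $\psi$ is strictly increasing with $\psi(\alpha)=0$ (Proposition~\ref{multiPsiRoot}), the new kernel has total mass $H(\infty)=p^*(2\alpha)+q^*(2\alpha)=1-\psi(2\alpha)<1$, so $H$ satisfies the subcritical hypothesis of Lemma~\ref{int-limit} with $1-H(\infty)=\psi(2\alpha)$. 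It then remains to identify $\lim_{t\to\infty}\hat\Phi(t,\tau)$: the boundary terms vanish after multiplication by $e^{-2\alpha t}$, while for the cross integrals I would substitute the first-moment asymptotics $m_f(s)\sim c_1 e^{\alpha s}$ and $m_g(s)\sim c_2 e^{\alpha s}$ of Theorem~\ref{expected-value-converges}; the factor $e^{-2\alpha y}$ localizes the integral near $y=0$, and a dominated-convergence argument using the bounds $m_f(s)e^{-\alpha s},\,m_g(s)e^{-\alpha s}\le C$ yields a limit proportional to $c_1c_2\bigl(p^*(2\alpha)+q^*(2\alpha)\bigr)$, independent of $\tau$. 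Lemma~\ref{int-limit} then gives $\hat w(t,\tau)\to(\text{const})/\psi(2\alpha)$.

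Finally I would feed the asymptotics of $w$ back into \eqref{m2f} itself. Once $w(t-y,\tau)$ is known to grow like a constant times $e^{2\alpha(t-y)+\alpha\tau}$, the term $\int_0^t w(t-y,\tau)\,dP(y)$ becomes an ordinary convolution whose normalized limit is again computed by dominated convergence, this time producing the factor $p^*(2\alpha)$. Combining this with the normalized forcing and simplifying with the identity $p^*(2\alpha)+q^*(2\alpha)=1-\psi(2\alpha)$ collapses the answer to the stated constant $D_1$. The computations for $n_{2,f}$ (using $n_f\sim d_1 e^{\alpha t}$ and $n_g\sim c_1 e^{\alpha t}$) and for $c_f$ (whose forcing is $\int_0^t[m_f n_g + n_f m_g]\,d(P+Q)$) are identical in structure and yield $D_2$ and the $c_f$ asymptotic, and the statements for $m_{2,g}$, $n_{2,g}$, $c_g$ follow by interchanging the roles of $P$ and $Q$.

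The main obstacle I anticipate is making the convergence uniform in $\tau$. The renormalization by $e^{-\alpha\tau}$ is chosen precisely so that the forcing limit is $\tau$-free, but $\hat\Phi$ still depends on $\tau$ through the shifted first moments $m_f(t+\tau-y)$ and through the boundary integral $\int_t^{t+\tau}$ taken over a window of length $\tau$ that may be large. The crux is therefore to dominate these $\tau$-dependent pieces by a single $\tau$-independent integrable function after normalization, so that both the dominated-convergence step identifying $\lim_t\hat\Phi$ and the final passage of the $w$-asymptotics under the integral sign hold uniformly in $\tau$; this uniformity is what upgrades the pointwise renewal limit supplied by Lemma~\ref{int-limit} to the uniform-in-$\tau$ statement asserted in the theorem.
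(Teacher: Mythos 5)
Your proposal follows essentially the same route as the paper's proof: the paper likewise multiplies \eqref{m2f}--\eqref{m2g} by $e^{-\alpha t}e^{-\alpha(t+\tau)}$, adds the two normalized equations to obtain a single renewal equation whose kernel $e^{-2\alpha y}\,d(P+Q)(y)$ has mass $p^*(2\alpha)+q^*(2\alpha)=1-\psi(2\alpha)<1$, identifies the $\tau$-uniform limit of the forcing via Theorem~\ref{expected-value-converges}, applies Lemma~\ref{int-limit}, and then substitutes the resulting limit back into the individual equation to recover the factor $p^*(2\alpha)$ in $D_1$. The only difference is the trivial one of adding before rather than after normalizing, so your argument is correct and matches the paper's.
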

\begin{proof}
Multiply both sides of \eqref{m2f} by $e^{-\alpha t}e^{-\alpha(t+\tau)}$, and multiply both sides of \eqref{m2g} by $e^{-\alpha t}e^{-\alpha(t+\tau)}$, and set
\begin{gather}
K_1(t,\tau) := e^{-\alpha t}e^{-\alpha(t+\tau)} m_{2,f}(t,\tau), \quad
K_2(t,\tau) := e^{-\alpha t}e^{-\alpha(t+\tau)} m_{2,g}(t,\tau), \\
d\overline{P}(y) : = \frac{e^{-2\alpha y}dP(y)}{p^*(2\alpha)}, \quad
d\overline{Q}(y) : = \frac{e^{-2\alpha y}dQ(y)}{q^*(2\alpha)}
\end{gather}
to get
\begin{equation}
\begin{aligned}
K_1(t,\tau) = p^*(2\alpha) \int_0^t K_1(t-y,\tau)+K_2(t-y) \, d\overline{P}(y) + h_1(t,\tau), \\
K_2(t,\tau) = q^*(2\alpha) \int_0^t K_1(t-y,\tau)+K_2(t-y) \, d\overline{Q}(y) + h_2(t,\tau)
\end{aligned}
\label{k12}
\end{equation}
where
\begin{equation}
\begin{aligned}
h_1(t,\tau) & := e^{-\alpha t} e^{-\alpha(t+\tau)} (1+P(t+\tau)) \\
& \phantom{{}={}} + p^*(2\alpha) \int_0^t e^{-\alpha(t-y)}m_f(t-y) e^{-\alpha(t+\tau-y)}m_g(t+\tau-y) \, d\overline{P}(y) \\
& \phantom{{}={}} + p^*(2\alpha) \int_0^t e^{-\alpha(t-y)}m_g(t-y) e^{-\alpha(t+\tau-y)}m_f(t+\tau-y) \, d\overline{P}(y) \\
& \phantom{{}={}} + p^*(2\alpha) \int_t^{t+\tau} e^{-\alpha(t-y)} e^{-\alpha(t+\tau-y)} m_f(t+\tau - y) \, d\overline{P}(y) \\
&\quad + p^*(2\alpha) \int_t^{t+\tau} e^{-\alpha(t-y)} e^{-\alpha(t+\tau-y)} m_g(t+\tau - y)  \, d\overline{P}(y).
\end{aligned}
\end{equation}
and
\begin{equation}
\begin{aligned}
h_2(t,\tau) & := q^*(2\alpha) \int_0^t e^{-\alpha(t-y)}m_f(t-y) e^{-\alpha(t+\tau-y)}m_g(t+\tau-y) \, d\overline{Q}(y) \\
& \phantom{{}={}} + q^*(2\alpha) \int_0^t e^{-\alpha(t-y)}m_g(t-y) e^{-\alpha(t+\tau-y)}m_f(t+\tau-y) \, d\overline{Q}(y).
\end{aligned}
\end{equation}
Using Theorem~\ref{expected-value-converges}, a standard $\epsilon$-$\delta$ argument shows:
\begin{equation}
\lim_{t \to \infty} h_1(t,\tau) = 2p^*(2\alpha) c_1c_2 , \quad
\lim_{t \to \infty} h_2(t,\tau) = 2q^*(2\alpha) c_1c_2 \label{m2f-limit}
\end{equation}
uniformly in $\tau$. Now, adding Equations~\eqref{k12} and setting
\begin{equation}
\begin{aligned}
K(t,\tau) &:= K_1(t,\tau) + K_2(t,\tau), \\
h(t,\tau) &:= h_1(t,\tau) + h_2(t,\tau), \\
d\overline{R}(y) &:= \frac{e^{-2\alpha y}d(P(y)+Q(y))}{p^*(2\alpha)+q^*(2\alpha)},
\end{aligned}
\end{equation}
we get
\begin{equation}
K(t,\tau) = (p^*(2\alpha) + q^*(2\alpha)) \int_0^t K(t-y,\tau) \, d\overline{R}(y) + h(t,\tau). 
\end{equation}
Since $p^*+q^*$ is a decreasing function and $p^*(\alpha) + q^*(\alpha) = 1$, then we must have $p^*(2\alpha) + q^*(2\alpha) < 1$.
Moreover, from \eqref{m2f-limit}
\begin{equation}
\lim_{t \to \infty} h(t,\tau) = 2(p^*(2\alpha) + q^*(2\alpha)) c_1c_2
\end{equation}
uniformly in $\tau$. Using Lemma~\ref{int-limit}, we can conclude we can conclude that
\begin{equation}
\lim_{t \to \infty} K(t,\tau) = \frac{2(p^*(2\alpha) + q^*(2\alpha)) c_1c_2}{\psi(2\alpha)} \label{Klimit}.
\end{equation}

Looking back at \eqref{k12}, and using~\eqref{m2f-limit} and~\eqref{Klimit}, we obtain the equation for $m_f(t,\tau)$.  The other equations follow similarly.
\end{proof}

Now, define
\begin{equation}
W_t := \frac{X_t}{c_1 e^{\alpha t}}, \quad
V_t := \frac{Y_t}{d_1 e^{\alpha t}}.
\end{equation}

\begin{prop}
If $P$ and $Q$ are not both $\delta$-lattice distributions, then
\begin{equation}
\lim_{t \to \infty} \mathbb{E}\left[\left(W_{t+\tau} - W_t \right)^2 \right] = 0, \quad
\lim_{t \to \infty} \mathbb{E}\left[\left(V_{t+\tau} - V_t \right)^2 \right] = 0
\end{equation}
uniformly in $\tau$, and
\begin{equation}
\mathbb{E}\left[\left(W_t - V_t \right)^2 \right] \to 0
\end{equation}
\end{prop}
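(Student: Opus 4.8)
The plan is to expand each squared difference into a combination of second moments and then substitute the asymptotics already obtained in Theorem~\ref{m2fn2fsim}. For the first claim, write $\mathbb{E}[(W_{t+\tau}-W_t)^2]=\mathbb{E}[W_{t+\tau}^2]-2\,\mathbb{E}[W_{t+\tau}W_t]+\mathbb{E}[W_t^2]$. Since $m_{2,f}(t,\tau)=\mathbb{E}[X_tX_{t+\tau}]$ and $W_t=X_t/(c_1e^{\alpha t})$, the three terms equal $c_1^{-2}m_{2,f}(t+\tau,0)e^{-2\alpha(t+\tau)}$, $c_1^{-2}m_{2,f}(t,\tau)e^{-\alpha t}e^{-\alpha(t+\tau)}$, and $c_1^{-2}m_{2,f}(t,0)e^{-2\alpha t}$ respectively. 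Theorem~\ref{m2fn2fsim} gives $m_{2,f}(t,\tau)e^{-\alpha t}e^{-\alpha(t+\tau)}\to D_1$ uniformly in $\tau$, so all three terms converge to the single constant $D_1/c_1^2$ and their alternating sum tends to $0$.

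The only subtlety here is that the convergence must be uniform in $\tau$. The middle (cross) term is handled directly by the uniform statement of Theorem~\ref{m2fn2fsim}; the term $m_{2,f}(t,0)e^{-2\alpha t}$ is the case $\tau=0$; and for $m_{2,f}(t+\tau,0)e^{-2\alpha(t+\tau)}$ I would use that $t+\tau\ge t$, so the one-variable limit $m_{2,f}(s,0)e^{-2\alpha s}\to D_1$ as $s\to\infty$ forces this term to within any prescribed tolerance of $D_1$ once $t$ is large, uniformly over all $\tau\ge 0$. The statement for $V_t$ is identical after replacing $m_{2,f}$, $c_1$, $D_1$ by $n_{2,f}$, $d_1$, $D_2$, each of the three terms now tending to $D_2/d_1^2$.

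For the third claim, expand $\mathbb{E}[(W_t-V_t)^2]=\mathbb{E}[W_t^2]-2\,\mathbb{E}[W_tV_t]+\mathbb{E}[V_t^2]$, where the cross term equals $c_f(t)(c_1d_1)^{-1}e^{-2\alpha t}$ and converges by the $c_f(t)\sim(\text{const})\,e^{2\alpha t}$ part of Theorem~\ref{m2fn2fsim}. Each of the three terms thus tends to a constant, and the claim amounts to showing these three constants coincide, so that $\gamma-2\gamma+\gamma=0$. This is the substantive point, and it is not automatic for arbitrary constants. The limits of $\mathbb{E}[W_t^2]$ and $\mathbb{E}[V_t^2]$ agree because the diagonal second-moment constants for $X$ and $Y$, divided by $c_1^2$ and $d_1^2$ respectively, are equal thanks to the residue identity $c_1^2=c_2d_1$, which is immediate from $c_1=p^*(\alpha)q^*(\alpha)/(\alpha\psi'(\alpha))$, $c_2=[q^*(\alpha)]^2/(\alpha\psi'(\alpha))$, and $d_1=[p^*(\alpha)]^2/(\alpha\psi'(\alpha))$ (equivalently $c_2/c_1=c_1/d_1$). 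The cross-term constant, arising from the mixed products $m_fn_g+n_fm_g$ in equation~\eqref{cf}, collapses under the same identity $c_1^2=c_2d_1$ to this common value; this is the analytic reflection of $X_t$ and $Y_t$ becoming perfectly correlated as $t\to\infty$. Hence all three constants coincide and the difference vanishes.

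The main obstacle I anticipate is bookkeeping rather than conceptual: keeping every convergence uniform in $\tau$ throughout the first two claims, and tracking the constants precisely enough to confirm that the limit of $\mathbb{E}[W_tV_t]$ is exactly the common value $\gamma$ that produces the cancellation in the third claim. Once the residue identity $c_1^2=c_2d_1$ is recorded and the cross-term constant is evaluated from \eqref{cf}, every remaining step is a direct substitution into Theorems~\ref{m2fn2fsim} and~\ref{expected-value-converges}.
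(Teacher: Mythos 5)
Your proof follows essentially the same route as the paper's: multiply out each squared difference into normalized second moments and apply Theorem~\ref{m2fn2fsim}; the paper's own proof is exactly this, compressed into a single sentence. Two points of comparison are worth recording. First, your insistence that the three limiting constants must actually coincide is the real content of the third claim, and you are right that it is not automatic: if one takes the constants printed in Theorem~\ref{m2fn2fsim} at face value ($n_{2,f}\sim D_2e^{\alpha t}e^{\alpha(t+\tau)}$ with $D_2=c_1c_2q^*(2\alpha)/\psi(2\alpha)$, and $c_f\sim D_1e^{2\alpha t}$), the alternating sum in $\mathbb{E}[(W_t-V_t)^2]$ does \emph{not} vanish. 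Redoing the renewal computation, the source term in \eqref{n2f} is asymptotically $2c_1d_1e^{\alpha t}e^{\alpha(t+\tau)}$ (since $n_f\sim d_1e^{\alpha t}$, $n_g\sim c_1e^{\alpha t}$) and that in \eqref{cf} is $2c_1^2e^{2\alpha t}$, so the correct constants are $2c_1d_1p^*(2\alpha)/\psi(2\alpha)$ and $2c_1^2p^*(2\alpha)/\psi(2\alpha)$; with your identity $c_1^2=c_2d_1$, equivalently $c_2/c_1=c_1/d_1=q^*(\alpha)/p^*(\alpha)$, all three normalized limits equal $2q^*(\alpha)p^*(2\alpha)/\bigl(p^*(\alpha)\psi(2\alpha)\bigr)$, and the cancellation goes through. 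So your argument is correct, and in carrying it out you implicitly correct typos in the cited theorem. Second, you only treat the initial condition $X_0=1$, $Y_0=0$ (your $m_{2,f}$, $n_{2,f}$, $c_f$ are all conditioned on it), whereas the paper's proof also handles $X_0=0$, $Y_0=1$ and then unspecified, possibly random, initial populations by writing $X_t$ and $Y_t$ as sums of $X_0+Y_0$ independent copies of the single-cell processes and verifying the $L^2$ statements survive the summation. That last step is short, but you should add it (or at least remark that the $(0,1)$ case uses $m_{2,g}$, $n_{2,g}$, $c_g$, and that the general case follows from the decomposition).
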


\begin{proof}
First we show these results conditionally on either $X_0 = 1$ and $Y_0 = 0$, or $X_0 = 0$ and $Y_0 = 1$.  The case when $X_0 = 1$ and $Y_0 = 0$ follows by multiplying out $\mathbb{E}\left[\left(W_{t+\tau} - W_t \right)^2 \right] $ and applying Theorem~\ref{m2fn2fsim} to $m_f(t,t)$, $m_f(t,\tau)$, and $m_f(\tau,\tau)$.  The other cases follow similarly.

Now, suppose that $X_0$ and $Y_0$ are not specified.  Then we can create $X_0$ random variables $(\tilde X^{(n)}_t,\tilde Y^{(n)}_t)$ for $1 \le n \le X_0$, each being an independent copy of $(X_t,Y_t)$ conditioned on $X_0=1$, $Y_0 = 0$, and $Y_0$ random variables $(\hat X^{(n)}_t,\hat Y^{(n)}_t)$ for $1 \le n \le Y_0$, each being an independent copy of $(X_t,Y_t)$ conditioned on $X_0=0$, $Y_0 = 1$.  Then
\begin{equation}
\begin{aligned}
X_t &= \sum_{n=1}^{X_0} \tilde X^{(n)}_t + \sum_{n=1}^{Y_0} \hat X^{(n)}_t \\
Y_t &= \sum_{n=1}^{X_0} \tilde Y^{(n)}_t + \sum_{n=1}^{Y_0} \hat Y^{(n)}_t
\end{aligned}
\end{equation}
A short verification shows that the result still holds.
\end{proof}

Since $L^2$ is a complete metric space, we get the immediate corollary:

\begin{cor} \label{cor w v}
$W_t$ and $V_t$ converge in $L^2$ to a random variable $W$.
\end{cor}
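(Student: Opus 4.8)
The plan is to recognize that the three limits established in the preceding Proposition are exactly the ingredients needed to invoke completeness of $L^2$ and then to identify a single common limit. The crucial observation is that the statement
\begin{equation}
\lim_{t \to \infty} \mathbb{E}\left[\left(W_{t+\tau} - W_t\right)^2\right] = 0 \quad \text{uniformly in } \tau
\end{equation}
is precisely the Cauchy criterion for the family $(W_t)_{t \ge 0}$ in $L^2$ as $t \to \infty$. Note first that each $W_t$ lies in $L^2$, since Theorem~\ref{m2fn2fsim} gives $\mathbb{E}[X_t^2] = O(e^{2\alpha t})$ and hence $\mathbb{E}[W_t^2]$ is bounded; the same holds for $V_t$. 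Then, given $s, t$ both large with, say, $s \ge t$, one writes $s = t + \tau$ with $\tau = s - t \ge 0$, and the uniform limit furnishes, for each $\epsilon > 0$, a threshold $T$ beyond which ${\|W_s - W_t\|}_{L^2} < \epsilon$ for all $s, t \ge T$, irrespective of the size of $\tau$.

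First I would make this Cauchy property explicit: fix $\epsilon > 0$, use the uniform convergence to extract $T$ so that ${\|W_{t+\tau} - W_t\|}_{L^2} < \epsilon$ for all $t \ge T$ and all $\tau \ge 0$, and observe that this says exactly that ${\|W_s - W_t\|}_{L^2} < \epsilon$ whenever $s, t \ge T$. Since $L^2$ is complete, the family $(W_t)$ converges in $L^2$ to some limit $W$. The identical argument applied to the second limit of the Proposition shows that $(V_t)$ converges in $L^2$ to some limit $V$.

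It then remains only to identify $W$ with $V$. For this I would invoke the third limit $\mathbb{E}[(W_t - V_t)^2] \to 0$ together with the triangle inequality in $L^2$:
\begin{equation}
{\|W - V\|}_{L^2} \le {\|W - W_t\|}_{L^2} + {\|W_t - V_t\|}_{L^2} + {\|V_t - V\|}_{L^2}.
\end{equation}
Each term on the right tends to $0$ as $t \to \infty$ — the first and third by the two convergences just established, the middle by the third hypothesis — so ${\|W - V\|}_{L^2} = 0$ and hence $W = V$ almost surely. Thus both $W_t$ and $V_t$ converge in $L^2$ to the common random variable $W$.

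I expect no substantive obstacle here, as the content is entirely contained in the preceding Proposition and the corollary is a formal consequence of completeness. The one point deserving care is the translation of ``uniformly in $\tau$'' into the Cauchy criterion: without the uniformity, the vanishing of $\mathbb{E}[(W_{t+\tau} - W_t)^2]$ for each \emph{fixed} $\tau$ would fail to control increments of arbitrarily large length and would not by itself imply convergence. The uniform bound is precisely what upgrades ``small increments of bounded span'' to ``$W_s$ and $W_t$ are uniformly close for all large $s,t$,'' which is what completeness requires.
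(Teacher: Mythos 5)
Your proposal is correct and follows exactly the paper's (very terse) argument: the paper simply notes that the preceding Proposition's uniform-in-$\tau$ limits constitute the Cauchy criterion in $L^2$, invokes completeness, and implicitly uses $\mathbb{E}[(W_t-V_t)^2]\to 0$ to identify the two limits. You have merely spelled out the details the paper leaves as ``immediate,'' including the correct emphasis that uniformity in $\tau$ is what makes the Cauchy criterion work.
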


\section{Distribution of mutant cells} \label{mutants}

In this section, we will determine a probability distribution for the mutant cell population size when the total cell population is effectively infinite. We start by making assumptions about the cell population and the expected number of mutations that occur at any given time. The culmination of this paper are Theorems~\ref{multiMutantDist} and~\ref{singleMutantDist}, where we give an explicit formula for the generating functions for the mutant cell population.

\subsection{Preliminaries and assumptions}

We start by stating our assumptions about cell growth under asymmetric cell division in which cells can mutate.

\begin{assume} Cells have the following properties:
\begin{enumerate}
\item there are exactly two types of cells: mother cells and daughter cells;
\item all cells are independent of each other, mother cells are identical to other mother cells, and daughter cells are identical to other daughter cells;
\item cell life-span for mother cells and daughter cells are strictly positive, real-valued random variables $\mathcal{T}_x$ and $\mathcal{T}_y$, respectively, with distributions $P$ and $Q$, respectively;
\item at the end of a cell's life, both mother cells and daughter cells will divide into one mother cell and one daughter cell;
\item when a cell divides, exactly one child cell can mutate with probability $\mu$;
\item all children spawned from a mutant cell will be mutants with no chance of losing the mutation.
\end{enumerate}
\end{assume}

We will be working backwards in time and let the current (time 0) cell population be $n$. In order for our model to work, we have to assume the following about our mutation rate and current cell population:

\begin{assume} \label{nLarge}
We will assume $n$ is very large and $\mu$ is sufficiently small so the product of $\mu$ and $n$ stays fixed as $\mu \to 0^+$ and $n \to \infty$, with
\begin{equation}
m := \mu n \label{def m}
\end{equation}
\end{assume}

\subsection{Constructing the generating function}
We will start by assuming time is discrete and takes values $k\delta_t$ for integers $k \geq 0$, where $\delta_t > 0$. Because we are working backwards in time, $k \delta_t$ represents time $k$ time-units ago.

Consider 4 random variables:
\begin{enumerate}
\item the number of mutant mother cells $N^M_k$ \emph{created} $k$ time-units ago;
\item the number of mutant daughter cells $N^D_k$ \emph{created} $k$ time-units ago;
\item the number of mutants $M_k$ that arise from any one mutant mother cell which was created at $k$ time-units ago;
\item the number of mutants $D_k$ that arise from any one mutant daughter cell which was created at $k$ time-units ago.
\end{enumerate}
Then the total number of mutants will be given by the formula
\begin{equation}
R = R_M + R_D,
\end{equation}
where
\begin{equation}
R_M = \sum_{k=0}^\infty \sum_{j=1}^{N^M_k} M^{(j)}_k, \quad
R_D = \sum_{k=0}^\infty \sum_{j=1}^{N^D_k} D^{(j)}_k, \label{kSumIndex}
\end{equation}
and where $M^{(j)}_k$ and $D^{(j)}_k$ denote independent copies of $M_k$ and $D_k$, respectively. 

Let $F_m(t)$ and $F_D(t)$ be functions representing the mother and daughter cell population $t$ time-units ago, respectively. So, $F_M(0) + F_D(0) = n$.

\begin{assume} \label{multiPoisson}
We will assume $N^M_k$ is a Poisson random variable such that
\begin{equation}
N^M_k \sim \operatorname{Pois}\left(\mu \left( F_M(k\delta_t)-F_M((k+1)\delta_t) \right) \right),
\end{equation}
which is assuming on average, a $\mu$-proportion of new mother cells created during that time-unit will become mutants. Similarly, we will assume $N^D_k$ is a Poisson random variable such that
\begin{equation}
N^D_k \sim \operatorname{Pois}\left(\mu \left( F_D(k\delta_t)-F_D((k+1)\delta_t) \right) \right),
\end{equation}
which is assuming on average, a $\mu$-proportion of new daughter cells created during that time-unit will become mutants.
\end{assume}

\begin{rem}
In practice, $k$ in Equations~\eqref{kSumIndex} only needs to increase to a point where the population of non-mutants $k$ time-units is much smaller than the current population. However, since we will assume we start with an effectively infinite cell population, mathematically, letting $k \to \infty$ is reasonable.
\end{rem}

\begin{assume} \label{Rindep}
In our model, we are counting the cell population that grows from a single mutant cell, and because cell growth is independent, we can assume $R_M$ and $R_D$ are independent.
\end{assume}

Using Assumption \ref{multiPoisson}, we obtain the generating function for $R_M$
\begin{equation}
\begin{aligned}
G_{R_M}(x) & = \prod_{k=0}^\infty G_{N^M_k}(G_{M_k}(x)) \\
& = \prod_{k=0}^\infty \exp\left( \mu \left( F_M(k\delta_t)-F_M((k+1)\delta_t) \right)(G_{M_k}(x)-1) \right) \\
& = \exp\left(-\mu F_M(0) + \mu \sum_{k=0}^\infty \left( F_M(k\delta_t)-F_M((k+1)\delta_t) \right) G_{M_k}(x)\right) \label{GrM}.
\end{aligned}
\end{equation}
Letting $\delta_t \to 0$ and replacing $G_{M_k}(x)$ with $f(t,x)$, we get a continuous-time model,
\begin{equation}
g_{R_M}(x) = \exp\left(-\mu F_M(0) - \mu \int_0^\infty f(t,x) \, dF_M(t) \right).
\end{equation}
Similarly,
\begin{equation}
g_{R_D}(x) = \exp\left(-\mu F_D(0) - \mu \int_0^\infty g(t,x) \, dF_D(t) \right).
\end{equation}

\begin{rem}
$f(t,x)$ and $g(t,x)$ represent the generating function for the \emph{total} cell population when you start from one mother cell or one daughter cell, respectively. This means $f(t,x) = f(t,x,x)$ and $g(t,x) = g(t,x,x)$ where the right-side of the equalities are the generating functions in \eqref{multiIntEqns}.
\end{rem}
Using Assumption \ref{Rindep}, we get
\begin{equation}
\begin{aligned}
g_R(x) & = g_{R_M}(x)g_{R_D}(x) \\
& = \exp\left(-\mu n - \mu \int_0^\infty f(t,x) \, dF_M(t) -\mu \int_0^\infty g(t,x) \, dF_D(t) \right) \label{grMulti}.
\end{aligned}
\end{equation}

Since we are working backwards in time, we can interpret our current time as infinite relative to when we started from a single cell.  Suppose $P$ and $Q$ are not both $\delta$-lattice distributions. By Theorem~\ref{main-asymp}, we know that asymptotically, the proportion of mother cells to the total cell population will be $p^*(\alpha)$, while the proportion of daughter cells to the total cell population will be $q^*(\alpha) = 1 - p^*(\alpha)$.
This leads us to the following assumption about the cell population at backwards time $t$.

\begin{assume} The mother cell population at backwards time $t$ is
\begin{equation}
F_M(t) = n p^*(\alpha) e^{-\alpha t},
\end{equation}
and the daughter cell population at backwards time $t$ is
\begin{equation}
F_d(t) = n q^*(\alpha) e^{-\alpha t}
\end{equation}
with high probability.
\end{assume}

Hence
\begin{equation}
g_R(x) = \exp \left(-m + m \alpha \int_0^\infty ( p^*(\alpha) f(t,x) + q^*(\alpha)g(t,x)) e^{-\alpha t} \, dt \right),
\end{equation}
where we recall Equation~\eqref{def m}.

The above construction culminates to the following theorem:

\begin{thm} \label{multiMutantDist}
Suppose distributions $P$ and $Q$ are not both $\delta$-lattice distributions for any $\delta>0$.  Let $R$ be a random variable representing the mutant cell population in an effectively infinite cell population $n$ with effectively zero mutation rate $\mu$, and set $m := \mu n$. Let $f(t,x)$ and $g(t,x)$ be the generating functions for the total cell population at time $t$ when you start from a single mother cell and daughter cell, respectively, with life-span distributions $P$ and $Q$ for mother cells and daughter cells, respectively, which must satisfy
\begin{equation}
\begin{aligned}
f(t,x) &= x(1-P(t)) + \int_0^t f(t-\tau,x)g(t-\tau,x) \, dP(\tau) \\
g(t,x) &= x(1-Q(t)) + \int_0^t f(t-\tau,x)g(t-\tau,x) \, dQ(\tau) .
\end{aligned}
\end{equation}
Let $\alpha$ be the root of
\begin{equation}
1- \int_0^\infty e^{-st} \, dP(t) - \int_0^\infty e^{-st} \, dQ(t) = 0
\end{equation}
and
\begin{equation}
\gamma := \int_0^\infty e^{-\alpha t} \, dP(t).
\end{equation}
Then the generating function for $R$ is
\begin{equation}
g_R(x) = \exp \left(-m + m \alpha \int_0^\infty ( \gamma f(t,x) + (1-\gamma)g(t,x)) e^{-\alpha t} \, dt \right). \label{gr2multi}
\end{equation}
\end{thm}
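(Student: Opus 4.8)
The plan is to assemble the generating function $g_R$ from the compound-Poisson construction developed immediately before the statement, and then to substitute the asymptotic population profiles $F_M$ and $F_D$ and carry out the resulting Stieltjes integrals. Since the integral equations that $f$ and $g$ must satisfy are exactly the total-population specialization $y=x$ of \eqref{multiIntEqns}, the only genuinely new work lies in two places: (i) passing from the discrete compound-Poisson product to its continuous-time exponential form, and (ii) a bookkeeping computation that identifies $\gamma$ with $p^*(\alpha)$ and evaluates the integrals against $dF_M$ and $dF_D$.

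First I would fix $\delta_t > 0$ and write $R_M = \sum_{k} \sum_{j=1}^{N^M_k} M^{(j)}_k$. Because the mutants descending from a mother cell created $k$ time-units ago form an independent branching family whose total-population generating function is $f(k\delta_t,x)$, I identify $G_{M_k}(x) = f(k\delta_t,x)$. Applying the compound-distribution proposition $G_Z = G_N \circ G_X$ at each $k$, the Poisson generating-function proposition $G_{N^M_k}(z) = \exp(\lambda_k(z-1))$ with $\lambda_k = \mu(F_M(k\delta_t) - F_M((k+1)\delta_t))$, and multiplying over the independent indices $k$, gives \eqref{GrM}; here the constant $-\mu F_M(0)$ arises from the telescoping identity $\sum_k \lambda_k = \mu F_M(0)$, using $F_M(\infty)=0$. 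Letting $\delta_t \to 0$, the remaining partition sum converges to $-\mu\int_0^\infty f(t,x)\,dF_M(t)$ (the minus sign reflecting that the positive increments $F_M(k\delta_t)-F_M((k+1)\delta_t)$ are the negatives of the Stieltjes increments of the decreasing $F_M$), producing $g_{R_M}$. The same argument yields $g_{R_D}$, and Assumption~\ref{Rindep} gives $g_R = g_{R_M}\,g_{R_D}$, which is \eqref{grMulti}.

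It then remains to substitute the asymptotic profiles, which are legitimate by the proportions $p^*(\alpha)$ and $q^*(\alpha)$ supplied by Theorem~\ref{main-asymp}. By definition $\gamma = \int_0^\infty e^{-\alpha t}\,dP(t) = p^*(\alpha)$, and since $\psi(\alpha) = 1 - p^*(\alpha) - q^*(\alpha) = 0$ we get $q^*(\alpha) = 1-\gamma$. With $F_M(t) = n p^*(\alpha) e^{-\alpha t}$ and $F_D(t) = n q^*(\alpha) e^{-\alpha t}$, the boundary terms combine as $-\mu(F_M(0)+F_D(0)) = -\mu n (p^*(\alpha)+q^*(\alpha)) = -\mu n = -m$, while $dF_M(t) = -n p^*(\alpha)\alpha e^{-\alpha t}\,dt$ turns $-\mu\int_0^\infty f(t,x)\,dF_M(t)$ into $m\alpha\gamma\int_0^\infty f(t,x)e^{-\alpha t}\,dt$, and likewise the daughter term contributes $m\alpha(1-\gamma)\int_0^\infty g(t,x)e^{-\alpha t}\,dt$. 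Collecting these is exactly \eqref{gr2multi}.

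The step I expect to require the most care is the continuous-time limit (i): justifying that the infinite compound-Poisson product converges and that its exponent converges to the stated integral as $\delta_t \to 0$. I would handle this by noting that for $x \in [0,1)$ the map $t \mapsto f(t,x)$ is bounded and monotone (since $X_t+Y_t$ is non-decreasing in $t$ with no cell death, so $\mathbb{E}[x^{X_t+Y_t}]$ is non-increasing), and $F_M$ is monotone and smooth, so the Riemann--Stieltjes sums converge to $\int_0^\infty f(t,x)\,dF_M(t)$; a uniform bound on the exponents across $k$ then lets me pass the limit through the exponential and the product. The substitution in (ii) and the identification of $\gamma$ are routine once the monotonicity of $F_M,F_D$ and the closed-interval convention for $\int_0^t$ are respected.
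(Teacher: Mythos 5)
Your proposal is correct and follows essentially the same route as the paper: the compound-Poisson product over creation times, the telescoping of the Poisson parameters against $F_M(0)$ and $F_D(0)$, the $\delta_t \to 0$ limit to the Stieltjes-integral exponent, independence of $R_M$ and $R_D$, and finally substitution of the asymptotic profiles $F_M(t) = n p^*(\alpha)e^{-\alpha t}$, $F_D(t) = n q^*(\alpha)e^{-\alpha t}$ with the identification $\gamma = p^*(\alpha)$, $1-\gamma = q^*(\alpha)$ via $\psi(\alpha)=0$. If anything, your treatment of the continuous-time limit (monotonicity of $t \mapsto f(t,x)$ and of $F_M$ justifying convergence of the Riemann--Stieltjes sums) is more careful than the paper's, which simply passes to the limit without comment.
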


It is also worth stating the special case when cell division is symmetric, when we don't need to distinguish between mother and daughter cells.

\begin{thm} \label{singleMutantDist}
Suppose the distribution $P$ is not a $\delta$-lattice distribution for any $\delta>0$.  Let $R$ be a random variable representing the mutant cell population in an effectively infinite cell population $n$ with effectively zero mutation rate $\mu$, and set $m := \mu n$. Let $f(t,x)$ total cell population at time $t$ when you start from a cell with life-span distribution $P$, which must satisfy
\begin{equation}
f(t,x) = x(1-P(t)) + \int_0^t f(t-\tau,x)^2 \, dP(\tau)
\end{equation}
Let $\alpha$ be the root of
\begin{equation}
1- 2\int_0^\infty e^{-st} \, dP(t) = 0.
\end{equation}
Then the generating function for $R$ is
\begin{equation}
g_R(x) = \exp \left(-m + m \alpha \int_0^\infty f(t,x) e^{-\alpha t} \, dt \right). \label{gr2single}
\end{equation}
\end{thm}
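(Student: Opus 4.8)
The plan is to obtain Theorem~\ref{singleMutantDist} as the specialization of Theorem~\ref{multiMutantDist} to the symmetric situation $Q = P$, in which mother and daughter cells become indistinguishable. First, note that the hypothesis that $P$ is not a $\delta$-lattice distribution for any $\delta > 0$ immediately implies that $P$ and $Q = P$ are not both $\delta$-lattice distributions, so Theorem~\ref{multiMutantDist} is applicable.

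The key reduction is that with $Q = P$ the two integral equations for $f(t,x)$ and $g(t,x)$ in Theorem~\ref{multiMutantDist} acquire identical right-hand sides, each equal to $x(1-P(t)) + \int_0^t f(t-\tau,x)g(t-\tau,x)\,dP(\tau)$. Subtracting the two equations therefore forces $f(t,x) = g(t,x)$ for all $t$ and $x$, and this common function satisfies the symmetric equation $f(t,x) = x(1-P(t)) + \int_0^t f(t-\tau,x)^2\,dP(\tau)$ of the statement. Alternatively, uniqueness of the fixed point of $T$ from Theorem~\ref{fixed point} yields the same conclusion, since interchanging the two coordinates leaves $T$ invariant when $P = Q$.

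Next I would identify the growth constant and the weight. Setting $Q = P$ gives $q^*(s) = p^*(s)$, so $\psi(s) = 1 - 2p^*(s)$ and the defining relation $\psi(\alpha) = 0$ becomes $1 - 2\int_0^\infty e^{-st}\,dP(t) = 0$ at $s = \alpha$, exactly the root appearing in Theorem~\ref{singleMutantDist}. In particular $p^*(\alpha) = \tfrac12$, and hence $\gamma = \int_0^\infty e^{-\alpha t}\,dP(t) = p^*(\alpha) = \tfrac12$, so that $1 - \gamma = \tfrac12$ as well.

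Finally I would substitute $f = g$ together with $\gamma = 1 - \gamma = \tfrac12$ into formula~\eqref{gr2multi}. The integrand $\gamma f(t,x) + (1-\gamma)g(t,x)$ then collapses to $f(t,x)$, and \eqref{gr2multi} reduces verbatim to \eqref{gr2single}, which is the assertion. I anticipate no genuine obstacle: the single point deserving care is the identification $f = g$, which is immediate from the coincidence of the two integral equations once $Q = P$, while everything else is direct substitution of the specialized constants.
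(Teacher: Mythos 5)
Your proposal is correct and follows exactly the route the paper intends: the paper states Theorem~\ref{singleMutantDist} as the special case of Theorem~\ref{multiMutantDist} with $Q = P$, omitting the details you supply (that the identical right-hand sides force $f = g$, that $\psi(s) = 1 - 2p^*(s)$ gives the stated root equation, and that $\gamma = p^*(\alpha) = \tfrac12$ makes the integrand collapse to $f(t,x)$). Your observation that $f = g$ follows immediately by subtracting the two integral equations, with uniqueness of the fixed point as an alternative justification, is a clean way to make the specialization rigorous.
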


\section{Examples} \label{MutantExamples}

\subsection{Life-span is multi-phase}

We say that $P$ has a ``multi-phase" distribution with parameter $n$ if is a gamma distribution with parameters $\beta = 1$ and $\alpha=n$. Biologically, this means that a cell is more likely to divide around the $n^{\text{th}}$ ``stage" of its life. If $n=1$, then a cell is most likely to divide closer to birth, and $P$ would be an exponential distribution. However, when $n=3$, for example, it would mean the cell takes time to mature before it can divide, which is more realistic biologically.

Suppose $P$ and $Q$ are multi-phase distributions with parameter $n$ and $m$, respectively. Then
\begin{equation}
dP(t) = \frac{t^{k-1}e^{-t}}{(k-1)!}\, dt, \quad
dQ(t) = \frac{t^{n-1}e^{-t}}{(n-1)!}\, dt
\end{equation}
and the exponential growth rate $\alpha$ for both $X_t$ and $Y_t$ is the root to
\begin{equation}
{(1+s)}^{k+n} - {(1+s)}^k - {(1+s)}^n = 0.
\end{equation}

The multi-phase distribution is of great importance in the asymmetric cell division case, and is a major inspiration for this paper, where we were interested in developing a generating function when the daughter cells take time to mature before they can divide. A good model for this would be to assume mother cells have multi-phase distribution with parameter $k=1$ or $n=2$, while daughter cells have multi-phase distribution $k=2$ or $n=3$.

\begin{thm} \label{Multiphase}
If $P$ and $Q$ are multi-phase distributions with parameters $k$ and $n$, respectively, then the generating functions $f=f(t,x,y)$ and $g=g(t,x,y)$ satisfy
\begin{equation}
\left(1+ \frac{\partial}{\partial t} \right)^k f = fg, \quad
\left(1+ \frac{\partial}{\partial t} \right)^n g = fg, \quad
f(0,x,y) = x, \quad
g(0,x,y) = y
\end{equation}
\end{thm}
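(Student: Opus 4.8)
The plan is to exploit the special algebra of the gamma densities: with $\beta = 1$, the multi-phase density with parameter $k$ is $p(t) = \frac{t^{k-1}e^{-t}}{(k-1)!}$, and the operator $\left(1+\frac{\partial}{\partial t}\right)^k$ is a constant-coefficient operator whose characteristic polynomial is $(1+s)^k$, with the single root $-1$ of multiplicity $k$. This operator is tailored exactly to this density, and the whole theorem reduces to two elementary facts about it, after which I apply it to the integral equation \eqref{multiIntEqns}. First I would rewrite that equation for $f$ in convolution form: writing $\bar P(t) := 1-P(t)$, $h(t,x,y) := f(t,x,y)g(t,x,y)$, and using $dP(\tau) = p(\tau)\,d\tau$, the $f$-equation becomes $f = x\,\bar P + h * p$ in the $t$ variable, with $x,y$ carried along as parameters.

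Then I would establish two lemmas about $L_k := \left(1+\frac{\partial}{\partial t}\right)^k$. The first is that $L_k \bar P = 0$ for $t > 0$: the standard gamma-tail identity gives $\bar P(t) = e^{-t}\sum_{j=0}^{k-1}\frac{t^j}{j!}$, and each function $t^j e^{-t}$ with $0 \le j \le k-1$ is a homogeneous solution, since $\left(1+\frac{\partial}{\partial t}\right)(t^j e^{-t}) = j\,t^{j-1}e^{-t}$, so one application lowers the degree by one and $k$ applications annihilate any polynomial-times-$e^{-t}$ of degree at most $k-1$. The second is the Duhamel identity $L_k(h*p) = h$. The cleanest route is to treat $k=1$ first: for $u = h * e^{-t} = e^{-t}\int_0^t h(\sigma)e^{\sigma}\,d\sigma$, differentiating gives $u' + u = h$; since the shape-$k$, rate-$1$ gamma density is the $k$-fold convolution of the exponential, associativity of convolution together with induction on $k$ yields $L_k(h*p) = h$. (Alternatively one may verify this in the Laplace domain using $p^*(s) = (1+s)^{-k}$, so that $(1+s)^k\,\mathcal L(h*p) = \mathcal L h$.)

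With these in hand, applying $L_k$ to $f = x\,\bar P + h*p$ and invoking the two lemmas gives $L_k f = fg$, and the identical argument applied to the $g$-equation with $L_n := \left(1+\frac{\partial}{\partial t}\right)^n$ gives $L_n g = fg$. The initial conditions $f(0,x,y) = x$ and $g(0,x,y) = y$ are then immediate from $P(0) = Q(0) = 0$, which force $\bar P(0) = \bar Q(0) = 1$ and kill the convolution term at $t=0$.

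The main obstacle I anticipate is the rigorous justification of the Duhamel identity $L_k(h*p) = h$, specifically differentiating the convolution $\int_0^t h(t-\tau)p(\tau)\,d\tau$ with a variable upper limit and tracking the boundary contributions; the $k=1$ computation is reassuring because it shows that those boundary terms are precisely what produce the inhomogeneous term $h$, and phrasing the general case as iterated application of $\left(1+\frac{\partial}{\partial t}\right)$ to nested convolutions keeps the bookkeeping manageable. A secondary technical point is confirming that $f$ and $g$ are genuinely $C^k$ (respectively $C^n$) in $t$ so the operator can be applied, which I would obtain by bootstrapping regularity from the smoothness of $p$ and $q$ through the integral equation.
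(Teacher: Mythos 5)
The paper states Theorem~\ref{Multiphase} without any proof in the text (the details are deferred to \cite{Oveys-thesis}), so there is no argument of the paper's to compare yours against; judged on its own merits, your proof is correct. The two lemmas you isolate are exactly the right ones. Since $dP(\tau)=p(\tau)\,d\tau$ with $p(t)=t^{k-1}e^{-t}/(k-1)!$, the first equation of \eqref{multiIntEqns} reads $f=x\bar P+(fg)*p$; the gamma-tail identity $\bar P(t)=e^{-t}\sum_{j=0}^{k-1}t^j/j!$ together with $\left(1+\frac{\partial}{\partial t}\right)(t^je^{-t})=jt^{j-1}e^{-t}$ gives $\left(1+\frac{\partial}{\partial t}\right)^k\bar P=0$, and the factorization of the shape-$k$ gamma density as the $k$-fold convolution of $e^{-t}$ reduces the Duhamel identity $\left(1+\frac{\partial}{\partial t}\right)^k\bigl((fg)*p\bigr)=fg$ to the elementary $k=1$ computation $u'+u=h$ for $u=h*e^{-t}$, by induction and associativity of convolution. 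The initial conditions follow from $P(0)=Q(0)=0$ and the absolute continuity of $P$ and $Q$, which makes the convolution term vanish at $t=0$. The only genuine technical point is the one you flag yourself, and your resolution of it works: since $f$ and $g$ are bounded (they are generating functions, so bounded by $1$ for $|x|,|y|\le 1$), $(fg)*p$ is continuous, hence $f$ and $g$ are continuous; then $(fg)*p\in C^1$ by the Leibniz rule with the smooth kernel $p$, hence $f,g\in C^1$; iterating this bootstrap through both equations simultaneously gives $f,g\in C^\infty$ in $t$, so the operators may be applied legitimately. Your argument is complete modulo writing out that induction, and it supplies a proof the paper itself omits.
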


\subsection{Symmetric cell division} \label{singleMutantExamples}

Suppose $P$ is an exponential distribution with parameter $1$, that is, it is a multiphase distributions with $n=1$.  Then the differential equation for the generating function is
\begin{equation}
\frac{\partial f}{\partial t} = f^2 - f, \quad
f(0,x) = x
\end{equation}
which can be solved to obtain
\begin{equation}
f(t,x) = \frac{xe^{-t}}{1-x+xe^{-t}}.
\end{equation}
Moreover $\alpha = 1$.  So, the generating function becomes
\begin{equation}
\begin{aligned}
g_R(x) &= \left( 1-x \right)^{\frac{m (1-x)}{x}} \\
&= e^{-m}+\frac{1}{2} e^{-m} m x+\frac{1}{24} e^{-m} m (3 m+4) x^2+\frac{1}{48} e^{-m} m (m+2)^2 x^3+ \dots .
\end{aligned}
\end{equation}
which is the Lea-Coulson distribution (see \cite{leaCoulson}).

\subsection{Asymmetric cell division} \label{multiMutantExamples}

Suppose $P$ and $Q$ are multi-phase distributions with parameters $1$ and $2$, respectively. Then we have
\begin{equation}
\alpha = \gamma = \frac{\sqrt{5}-1}{2}.
\end{equation}
This models a situation where mother cells have shorter expected life-spans than daughter cells.

Using the series solution to $f(t,x)$ and $g(t,x)$ discussed in Section~\ref{multiSeriesSoln}, and the Ma-Sandri-Sarkar algorithm (see \cite{mss}), we can compute the coefficients of $g_R(x)$ relatively quickly.  These were computed using the software package {\tt sagemath} (see \cite{sagemath}), using code shown in Section~\ref{sage-code}.
\Tiny
\def\nl{\\&\phantom{{}={}}}
\begin{align*}
\Pr(R = 0 ) &= e^{-m} \\
\Pr(R = 1 ) &= 0.472135955 \, m \, e^{-m} \\
\Pr(R = 2 ) &= (0.111456180002 \, m^{2} + 0.172209268743 \, m) e^{-m} \\
\Pr(R = 3 ) &= (0.0175408233286 \, m^{3} + 0.0813061875578 \, m^{2} + 0.0923265707329 \, m) e^{-m} \\
\Pr(R = 4 ) &= (0.00207041334343 \, m^{4} + 0.019193787255 \, m^{3} + 0.0584187097653 \, m^{2} + 0.054750171345 \, m) e^{-m}\\
\Pr(R = 5 ) &= (0.000195503316229 \, m^{5} + 0.00302069235857 \, m^{4} + 0.0172912064384 \, m^{3} + 0.0417490156659 \, m^{2} + 0.0359263840576 \, m) e^{-m} \\
\Pr(R = 6 ) &= (1.53840241522 \times 10^{-5} \, m^{6} + 0.000356544367868 \, m^{5} + 0.00327215809956 \, m^{4} + 0.0144601403668 \, m^{3} + 0.0306527224493 \, m^{2} \nl + 0.0253788856813 \, m) e^{-m} \\
\Pr(R = 7 ) &= (1.03762156212 \times 10^{-6} \, m^{7} + 3.36674831246 \times 10^{-5} \, m^{6} + 0.000451249775082 \, m^{5} + 0.00313432945934 \, m^{4} \nl + 0.0118370547378 \, m^{3} + 0.0232240363224 \, m^{2} + 0.0188813484058 \, m) e^{-m} \\
\Pr(R = 8 ) &= (6.12373058949 \times 10^{-8} \, m^{8} + 2.64927154958 \times 10^{-6} \, m^{7} + 4.87502731868 \times 10^{-5} \, m^{6} + 0.000487114247624 \, m^{5} \nl + 0.00283908715808 \, m^{4} + 0.00968207809151 \, m^{3} + 0.018100793276 \, m^{2} + 0.0145941835643 \, m) e^{-m} \\
\Pr(R = 9 ) &= (3.21248154448 \times 10^{-9} \, m^{9} + 1.78688050445 \times 10^{-7} \, m^{8} + 4.3192805187 \times 10^{-6} \, m^{7} + 5.85526505241 \times 10^{-5} \, m^{6} \nl + 0.000484413747255 \, m^{5} + 0.00250654733417 \, m^{4} + 0.0079759775165 \, m^{3} + 0.0144521031634 \, m^{2} + 0.0116173054839 \, m) e^{-m} \\
\Pr(R = 10 ) &= (1.51672804192 \times 10^{-10} \, m^{10} + 1.0545631668 \times 10^{-8} \, m^{9} + 3.23914598676 \times 10^{-7} \, m^{8} + 5.71296309855 \times 10^{-6} \, m^{7} \nl + 6.34669453811 \times 10^{-5} \, m^{6} + 0.000460389737452 \, m^{5} + 0.00218929525915 \, m^{4} + 0.00663573155793 \, m^{3} + 0.0117763023231 \, m^{2} \nl + 0.00946644120115 \, m) e^{-m}
\end{align*}

\normalsize

\subsection{Sage code to compute probabilities for multiphase distributions}
\label{sage-code}

\begin{verbatim}
t = var('t'); m = var('m'); y = var('y'); a = var('a')

# The next line seems to be required to make the definite integrals work.
assume(t>0)

# number of coefficients to compute.
degree = 10

# life-span distribution density of P (multi-phase with parameter k).
k = 1
p(t) = t^(k-1)*exp(-t)/factorial(k-1)
P(t) = integrate(p(y),y,0,t)

# life-span distribution density of Q (multi-phase with parameter n).
n = 2
q(t) = t^(n-1)*exp(-t)/factorial(n-1)
Q(t) = integrate(q(y),y,0,t)

# find alpha and gamma.
eqn = (1+x)^(k+n)-(1+x)^k-(1+x)^n
assume(x,'real')
alpha = max(a.rhs() for a in eqn.solve(x))
# use float for faster calculations.
alpha = float(alpha)
gamma = integrate(exp(-alpha*t)*p(t),t,0,infinity)

# store coefficients c[i] and b[i] of the generating functions f and g.
c = [0,1-P(t)]
b = [0,1-Q(t)]
for i in xrange(2,degree+1):
    tempsum(t) = sum(expand(c[j]*b[i-j]) for j in xrange(1,i))
    # indefinite integration is faster than definite integration.
    temp1(y)=integrate(expand(tempsum(y)*expand(p(t-y))),y)
    temp2(y)=integrate(expand(tempsum(y)*expand(q(t-y))),y)
    c.append(temp1(t)-temp1(0))
    b.append(temp2(t)-temp2(0))

# store coefficients of H = integral(pgf*alpha*e^(-alpha t),t,0,infinity).
h = [0]
for i in xrange(1,degree+1):
    h.append(integrate((gamma*c[i]+
                       (1-gamma)*b[i])*
                       alpha*exp(-alpha*t),t,0,infinity))

# compute coefficients of e^H using MSS algorithm.
prob = [1]
for r in xrange(1,degree+1):
    prob.append(expand(sum(m*s/r*h[s]*prob[r-s] for s in xrange (1,r+1))))

for r in xrange(0,degree+1):
    print "\\Pr(R =",r,") &=",latex(e^(-m)*prob[r]),"\\\\"
\end{verbatim}

\subsection{Mathematica code to compute probabilities for multiphase distributions}

\begin{verbatim}
degree = 10;
k = 1;
p[t_] = t^(k - 1) Exp[-t]/(k - 1)!;
P[t_] = Integrate[p[y], {y, 0, t}];
n = 2;
q[t_] = t^(n - 1) Exp[-t]/(n - 1)!;
Q[t_] = Integrate[q[y], {y, 0, t}];
alpha = N[Max[x /. Solve[(1 + x)^(k + n) - (1 + x)^k - (1 + x)^n == 0, x]]];
gamma = Integrate[Exp[-alpha t] p[t], {t, 0, Infinity}];
c[1] = 1 - P[t];
b[1] = 1 - Q[t];
Do[tempsum[t_] = Sum[Expand[c[j] b[i - j]], {j, 1, i - 1}];
   c[i] = Integrate[Expand[tempsum[y] Expand[p[t - y]]], {y, 0, t}]; 
   b[i] = Integrate[Expand[tempsum[y] Expand[q[t - y]]], {y, 0, t}],
   {i, 2, degree}];
Do[h[i] = Integrate[(gamma c[i] + (1 - gamma) b[i]) alpha Exp[-alpha t],
                    {t, 0, Infinity}],
   {i, 1, degree}];
prob[0] = 1;
Do[prob[r] = Expand[Sum[m s/r h[s] prob[r - s], {s, 1, r}]], {r, 1, degree}];
Table[{r, prob[r] Exp[-m]}, {r, 0, degree}] // TableForm
\end{verbatim}

\section*{Acknowledgments}

We would like to express our sincere thanks to Professor George Smith for bringing this problem to our attention, and many valuable discussions.

Some of this research was funded by NSF DMS Grant 0928053, PRISM: Mathematics in Life Sciences.

% Bibliography

\bibliographystyle{plain}
\nocite{*}
\bibliography{references}

\end{document}